\title{Convenient Multiple Directions of Stratification\thanks{
This research benefited from the support of the chair "Risques
Financiers", Fondation du Risque.}}
\author{Benjamin Jourdain, Bernard Lapeyre\footnote{Université Paris-Est, CERMICS, Projet MathFi, ENPC-INRIA-UMLV,
6 et 8 avenue Blaise Pascal, 77455 Marne La Vallée, Cedex 2, France,
E-mails: jourdain@cermics.enpc.fr, bl@cermics.enpc.fr}, Piergiacomo
Sabino\footnote{Université Paris 7 Diderot, LPMA, 175 rue du
Chavaleret, 75013 Paris, France, Email: sabino@math.jussieu.fr}}
\date{}
\newtheorem{prop}{Proposition}
\newtheorem{theo}{Theorem}
\newtheorem{corr}{Corollary}
\newtheorem{rem}{Remark}
\begin{document}
\maketitle
\def\indic{1\!\!1}
\def\Var{\mathbb{V}\textrm{ar}}
\begin{abstract}
This paper investigates the use of multiple directions of
stratification as a variance reduction technique for Monte Carlo
simulations of path-dependent options driven by Gaussian vectors.
The precision of the method depends on the choice of the directions
of stratification  and the allocation rule within each strata.
Several choices have been proposed but, even if they provide
variance reduction, their implementation is computationally
intensive and not applicable to realistic payoffs, in particular not
to Asian options with barrier. Moreover, all these previously
published methods employ orthogonal directions for multiple
stratification. In this work we investigate the use of algorithms
producing convenient directions, generally non-orthogonal, combining
a lower computational cost with a comparable variance reduction. In
addition, we study the accuracy of optimal allocation in terms of
variance reduction compared to the Latin Hypercube Sampling. We
consider the directions obtained by the Linear Transformation and
the Principal Component Analysis. We introduce a new procedure based
on the Linear Approximation of the explained variance of the payoff
using the law of total variance. In addition, we exhibit a novel
algorithm that permits to correctly generate normal vectors
stratified along non-orthogonal directions. Finally, we illustrate
the efficiency of these algorithms in the computation of the price
of different path-dependent options with and without barriers in the
Black-Scholes and in the Cox-Ingersoll-Ross markets.

\end{abstract}
Keywords. Monte Carlo methods, variance reduction, stratification
methods.
\section{Introduction}
The main purpose of Monte Carlo (MC) simulations is to compute
integrals numerically. It is frequently the only alternative for
solving problems in applied sciences and notably for financial
applications. The pricing of derivative contracts and value-at-risk
calculations for risk-management purposes typically require
numerical simulations.  However, the MC method for high-dimensional
problems is a demanding computational  task and a considerable
number of studies have been devoted to increase its efficiency via
variance reduction techniques. This paper investigates the use of
multiple directions of stratification as a variance reduction
technique for MC simulations of path-dependent options driven by
high-dimensional Gaussian vectors. The precision of the method
depends on the choice of the partitions of the space and the
allocation of the number of samples within each strata. Usually, the
strata are polyhedrons delimited by hyperplanes orthogonal to a few
direction vectors. Several choices have been proposed: Glasserman et
al. \cite{GHS99} select the directions for the stratification of
linear projections based on the quadratic approximation of the
integrand or payoff function. In contrast, Etoré et al.
\cite{EFJM09} find the directions by adaptive techniques. These two
approaches provide a high variance reduction but their
implementation can be computationally intensive and the former one
cannot be applied to more realistic payoff functions such as Asian
options with barrier at each time step. Moreover, these two methods
suppose orthogonal directions for multiple stratification. In this
work, we investigate the use of algorithms producing convenient
directions, generally non-orthogonal, combining a lower
computational cost with a variance reduction that is comparable to
the above mentioned methods. In addition, we study the accuracy of
optimal allocation, combined with the above stratification
techniques, in terms of variance reduction, compared to ``fixed''
allocation procedures such as Latin Hypercube Sampling (LHS). We
consider the directions produced by the Linear Transformation (LT)
decomposition introduced by Imai and Tan \cite{IT2006} and  the
Principal Component Analysis (PCA). Moreover, we propose a new
procedure based on the Linear Approximation (LA) of the
``explained'' variance of the payoff function by the use of the law
of total variance. Notably, we design a novel algorithm that permits
to correctly generate multivariate normal random vectors stratified
along non-orthogonal directions. We illustrate the efficiency of the
proposed algorithms and their combination for the computation of the
price of different path-dependent options with and without barriers
in the Black-Scholes (BS) and in the Cox-Ingersoll-Ross (CIR)
models. In the former dynamics, it turns out that the LA and the LT
approaches return the same first order direction while this vector
is almost parallel to the one obtained by the GHS technique even in
the case of Asian options with a barrier at expiry. This justifies
the application and the good performance of the LA (and LT) if the
barrier is at each monitoring time. Consequently, the approaches
return the same variance reduction and the LA (LT) is easier to
implement and has a lower computational cost. We repeat our
numerical investigation in the CIR framework where we find explicit
solutions for the LT and LA directions. In order to find a further
direction, we compute the first principal component of the sampled
covariance matrix of the price process obtained by a MC estimation
via a pilot run. In both BS and CIR dynamics, LT and LA return
remarkable variance reduction with a low computational cost. We also
show that in some setting the stratification along multiple
directions can be more efficient than stratifying along a single
one. In particular, the combination of the LA (LT) direction and a
non-orthogonal direction, notably the first principal component, can
even outperform the variance reduction of two orthogonal directions
in the case of barrier options. Finally, as far as the allocation of
the samples is concerned, in any case the LHS displays a
considerable higher computational time and has always a lower
variance reduction as compared to the use of a convenient direction
of stratification with optimal allocation.

The paper is organized as follows. Section \ref{sect:SLP} reviews
the main  ideas of stratification and the motivations of this study.
Section \ref{sect:StratNonOrth} presents the new algorithm that
permits the stratification along non-orthogonal directions. Section
\ref{sect:CD} discusses the use of convenient stratification
directions and in particular, contains the presentation of the LT
decomposition and the introduction of the LA procedure. In Section
\ref{sect:FA} we explain the financial applications and find the
explicit solutions for the LA and the LT methods both for the BS and
the CIR dynamics. In Section \ref{sect:NI} the variance reductions
and the computational costs of the proposed technique are
illustrated by numerical experiments. Finally, Section
\ref{sect:concl} concludes the paper by summarizing the most
important findings.
\section{Stratified Sampling and Linear Projections}\label{sect:SLP}
Stratified Sampling is a general variance reduction technique that
consists of drawing the observations from specific partitions of the
sample space. More specifically, suppose we want to compute by MC
simulations an expectation of the form $\mathbb{E}[g(\mathbf{Y})]$
where $g:\mathbb{R}^d\rightarrow\mathbb{R}$ is a Borel function and
$\mathbf{Y}$ is a $\mathbb{R}^d$-valued random vector with the
assumption that $\mathbb{E}[g(\mathbf{Y})^2]<\infty$. Consider a
\emph{stratification variable} $X$ and let $A_1,\dots,A_K$  be
disjoint subsets of the real line for which
$\mathbb{P}\left(\bigcup_{k=1}^K\{X\in A_k\}\right)=1$. Then
\begin{equation}
\mathbb{E}[g(\mathbf{Y})] =
\sum_{k=1}^K\mathbb{E}[g(\mathbf{Y})|X\in A_k]\mathbb{P}(X\in A_k) =
\sum_{k=1}^K\mathbb{E}[g(\mathbf{Y})|X\in A_k]p_k
\end{equation}
where $p_k = \mathbb{P}(X\in A_k),\,k=1,\dots,K$. The
\emph{stratified estimator} with $N_S$ draws is defined as:
\begin{equation}
\sum_{k=1}^Kp_k\frac{1}{n_k}\sum_{j=1}^{n_k}g(Y_{kj})=\frac{1}{N_S}\sum_{k=1}^K\frac{p_k}{q_k}\sum_{j=1}^{n_k}g(Y_{kj}),
\end{equation}
where $n_k$ are the number allocations in the $k$-th stratum and
$q_k=n_k/N_S$ is their fraction  in the $k$-th stratum and $Y_{kj}$
are independent draws from the conditional distribution of $Y$ given
$X\in  A_k$. Its variance is given by
$\sum_{k=1}^Kp_k^2\frac{\sigma_k^2}{n_k}$ where $\sigma_k$ is the
conditional variance of $g(\mathbf{Y})$ given $X\in A_k$.

This estimator may be more efficient than the usual MC sample mean
estimator of a random sample of size $N_S$. The potential higher
efficiency of the former estimator critically depends on the
allocation rule and the choice of the partition of the sample space.
The optimal allocation rule is the one that minimizes the variance
of the stratified sampling estimator given the partition of the
state space and the constraint $\sum_{k=1}^Kq_k=1$. It is given by:
\begin{equation}
q_k = \frac{p_k\sigma_k}{\sum_{k=1}^Kp_k\sigma_k}.
\end{equation}
The probabilities $p_k$ are known whereas generally the conditional
variances are not known. They can be estimated in a pilot run and
then used in a second stage to determine  the stratified estimator.
This is not the optimal procedure and more sophisticated techniques
can be employed, see for example Etoré and Jourdain \cite{EJ09}.

We focus our attention on MC simulation driven by high-dimensional
Gaussian vectors that are of particular interest in financial
applications. As such, we consider in the following only normal
random variables.

\subsection{Stratifying Linear Projections: 1-dimensional Setting}
We begin with a general description of stratifying a linear
projection of a Gaussian random vector. Suppose $\mathbf{Z}$ is a
$d$ dimensional centered Gaussian random vector,
$\mathbf{Z}\sim\mathcal{N}(\mathbf{0},\Sigma_Z)$ and then consider
the stratification variable $X$ as the linear projection of
$\mathbf{Z}$ over a fixed direction $\mathbf{v}\in\mathbb{R}^d$, $X
= \mathbf{v}\cdot\mathbf{Z}$. $X$ is also Gaussian with variance
$\mathbf{v}\cdot\Sigma_Z\mathbf{v}$. This choice permits to
partition the sample space $\mathbb{R}^d$ into strata defined by
\begin{equation}
S_{k,v}=\left\{\mathbf{x}\in\mathbb{R}^d,\,\mathbf{x}\cdot\mathbf{v}\in
A_k\right\}.
\end{equation}
Due to the Gaussian structure of the random variables we can
generate $\mathbf{Z}$ stratified along the direction $\mathbf{v}$ in
the following way. Consider a general Gaussian random vector
$\mathbf{Y}=(\mathbf{Y}_1,\mathbf{Y}_2)$:
\begin{equation}
\mathbf{Y}=(\mathbf{Y}_1,\mathbf{Y}_2)\sim\mathcal{N}\left(
  \begin{array}{cc}
  \left(
    \begin{array}{c}
      \mu_1 \\
      \mu_2 \\
    \end{array}
  \right)
     ,&
       \left(
 \begin{array}{cc}
      \Sigma_{11} & \Sigma_{12}\\
      \Sigma_{21} & \Sigma_{22} \\
    \end{array}
  \right)
  \end{array}
\right)
\end{equation}
\noindent and denote
$\mathcal{L}\left(\mathbf{Y}_1|\mathbf{Y}_2=\mathbf{x}\right)$ the
law of $\mathbf{Y}_1$ given $\mathbf{Y}_2=\mathbf{x}$, it is
possible to prove (see for instance Glasserman \cite{Glass2004})
that
\begin{equation}
\mathcal{L}(\mathbf{Y}_1\left|\right.\mathbf{Y}_2=\mathbf{x})=\mathcal{N}\left(
\mu_1+\Sigma_{12}\Sigma_{22}^{-1}\left( \mathbf{x} - \mu_2\right),
\Sigma_{11}-\Sigma_{12}\Sigma_{22}^{-1}\Sigma_{21}  \right).
\end{equation}
where we assume that $\Sigma_{22}$ is invertible. Adapting the above
result for $\mathbf{Z}$ given $X=\mathbf{v}\cdot\mathbf{Z}$ and
$\Var[X]=\mathbf{v}\cdot\Sigma_Z\mathbf{v}=1$ we have
\begin{equation}
\mathcal{L}\left(\mathbf{Z}\left|X=x\right.\right)=\mathcal{N}\left(
\frac{\Sigma_Z\mathbf{v}}{\mathbf{v}\cdot\Sigma_Z\mathbf{v}}x,\Sigma_Z-\frac{\Sigma_Z\mathbf{v}\mathbf{v}^T\Sigma_Z}
{\mathbf{v}\cdot\Sigma_Z\mathbf{v}}
\right)=\mathcal{N}\left(\Sigma_Z\mathbf{v}x,\Sigma_Z-\Sigma_Z\mathbf{v}\mathbf{v}^T\Sigma_Z
\right).
\end{equation}
If we consider $\Sigma_Z=I_d$ the above equation becomes:
\begin{equation}\label{eq:strat1dim}
\mathcal{L}\left(\mathbf{Z}\left|X=x\right.\right)=\mathcal{N}\left(\mathbf{v}x,I_d-\mathbf{v}^T\mathbf{v}
\right).
\end{equation}
The conditional covariance matrix $D = I_d-\mathbf{v}^T\mathbf{v}$
does not depend on $x$ and since $D$ is an orthogonal projection
matrix, we have  $DD^T=D$. Due to this result, we do not need to
compute the Cholesky (or a general square-root) matrix of $D$ to
sample from the conditional distribution of $\mathbf{Z}$ given $X$.
These observations give an easy and simple algorithm to generate $K$
samples of $\mathbf{Z}$ stratified along the direction $\mathbf{v}$.

Suppose now that $A_k$ is the interval between the quantiles of
order $\frac{k-1}{K}$ and of order $\frac{k}{K}$ of the standard
normal distribution. We can sample from $\mathbf{Z}$ given
$\mathbf{Z}\cdot \mathbf{v}\in A_k$ in the following steps:
\begin{enumerate}
\item generate $U\sim\mathcal{U}([0,1])$.
\item Set $V = \frac{k-U}{K}$ and $X = \Phi^{-1}(V)$,
with $\Phi$ the inverse of the cumulative normal distribution.
\item Generate
$\mathbf{Z}'\sim\mathcal{N}\left(\mathbf{0},I_d\right)$ independent
on $U$.
\item Set $\mathbf{v}X +
\left(I-\mathbf{v}\mathbf{v}^T\right)\mathbf{Z}'$.
\end{enumerate}
We suggest to implement the last term in the last step as
$\mathbf{Z}'-\mathbf{v}(\mathbf{v}\cdot\mathbf{Z}')$ which requires
$O(d)$ operation rather than $O(d^2)$.

\subsection{Stratifying Linear Projections: Multidimensional Setting}
We start with the case of orthogonal directions and consider a
matrix $V\in\mathbb{R}^{d\times d'},d'\le d$, whose columns are the
direction vectors, such that $V^TV=I_{d'}$. Following the notation
introduced above we have:
\begin{equation}
\mathbf{X} = V^T\mathbf{Z}
\end{equation}
where now $\mathbf{X}$ is $d'$ dimensional. Moreover,
\begin{equation}
  \left(
    \begin{array}{c}
      \mathbf{Z} \\
      \mathbf{X} \\
    \end{array}
  \right)
\sim\mathcal{N}\left(
  \begin{array}{cc}
  \left(
    \begin{array}{c}
      \mathbf{0} \\
      \mathbf{0} \\
    \end{array}
  \right)
     ,&
       \left(
 \begin{array}{cc}
      \Sigma_Z & \Sigma_Z V\\
      V^T\Sigma_Z & V^T\Sigma_ZV \\
    \end{array}
  \right)
  \end{array}
\right)
\end{equation}
Consequently
\begin{equation}
\mathcal{L}\left(\mathbf{Z}\left|\mathbf{X}=\mathbf{x}\right.\right)=\mathcal{N}\left(
\Sigma_ZV\left(V^T\Sigma_ZV\right)^{-1}\mathbf{x},\Sigma_Z-\Sigma_Z
V\left(V^T\Sigma_ZV\right)^{-1}V^T\Sigma_Z\right)
\end{equation}
where we assume that $V^T\Sigma_ZV$ is invertible. In the case
$\Sigma_Z=I_d$ we have
\begin{equation}
\mathcal{L}\left(\mathbf{Z}\left|X=x\right.\right)=\mathcal{N}\left(
V\left(V^TV\right)^{-1}\mathbf{x},I_d-
V\left(V^TV\right)^{-1}V^T\right).
\end{equation}
Hence, if we adopt orthogonal directions $V^TV=I_{d'}$ the algorithm
to stratify $\mathbf{Z}$ given $\mathbf{X}=V^T\mathbf{Z}$ is a
simple multidimensional version of the algorithm illustrated before
where now we should stratify the $d'$ dimensional hypercube
$[0,1]^{d'}$. Suppose, for example, that we stratify the $j$-th
coordinate of the hypercube, $j=1,\dots,d'$, into $K_j$ intervals of
equal length so that we have a total number of
$K_1\times\cdots\times K_{d'}$ equiprobable strata. In this
multidimensional setting we can sample from $\mathbf{Z}$ given
$\mathbf{X}=V^T\mathbf{Z}\in A_k$, where
$A_k=\prod_{j=1}^{d'}\Phi\left(\left[\frac{k_j-1}{K_j},\frac{k_j}{K_j}\right]\right)$,
in the following steps:
\begin{enumerate}
\item generate $\mathbf{U}=\left(U_1\dots, U_{d'}\right)$ with independent components each of law $\mathcal{U}([0,1])$.
\item Set $V_j = \frac{k_j-U_j}{K_j}$ with $j\in\{1,\dots,d'\}$ and
$k_j\in\{1,\dots,K_j\}$.
\item Set $\mathbf{X}=\left(X_1\dots, X_{d'}\right)$, $X_j = \Phi^{-1}(V_j)$.
\item Generate
$\mathbf{Z}'\sim\mathcal{N}\left(\mathbf{0},I_d\right)$ independent
of $\mathbf{U}$.
\item Set $V\mathbf{X} + \left(I_d-VV^T\right)\mathbf{Z}'$.
\end{enumerate}
We now investigate the possibility to stratify over different
directions that can be non-orthogonal either. When the directions
are not orthogonal the components of $\mathbf{X}$ are not
independent since $\Var[\mathbf{X}] =VV^T\ne I_{d'}$ and the
previous multidimensional algorithm cannot be adopted anymore. A
first way yo approach this problem may be to assume
$\mathbf{X}\stackrel{\mathcal{L}}{=}C_X\epsilon$ with
$\epsilon\sim\mathcal{N}(\mathbf{0},I_{d'})$ independent on
$\mathbf{Z}$ and $C_X\in \mathbb{R}^{d'\times d'}$ such that
$\Var[\mathbf{X}]=C_XC_X^T$, and use the following slight
modification of the above algorithm.
\begin{enumerate}
\item generate $\mathbf{U}=\left(U_1\dots, U_{d'}\right)$ with independent components each of law $\mathcal{U}([0,1])$.
\item Set $V_j = \frac{k_j-U_j}{K_j}$ with $j=\{1,\dots,d'\}$ and
$k_j=\{1,\dots,K_j\}$.
\item Set $\epsilon=\left(\epsilon_1\dots, \epsilon_{d'}\right)$, $\epsilon_j = \Phi^{-1}(V_j)$.
\item Generate
$\mathbf{Z}'\sim\mathcal{N}\left(\mathbf{0},I_d\right)$ independent
of $\mathbf{U}$.
\item Set $V(C_X^X)^{-1}\epsilon+(I_d-
V\left(V^TV\right)^{-1}V^T)\mathbf{Z}'$.
\end{enumerate}
%
However, although mathematically correct, this algorithm stratifies
the marginals of the random vector $\epsilon$ that has independent
components. This construction does not consider the fact that the
marginals of $\mathbf{X}$ are not independent and the introduction
of the dependence can affect this partial stratification in
complicated ways (see Glasserman \cite{Glass2004}).

\section{Stratification along non-orthogonal directions}\label{sect:StratNonOrth} In this
section we show how to  generate multivariate normal random vectors,
$\mathbf{Z}\sim\mathcal{N}(0,I_d)$, stratified along non-orthogonal
directions. We prove the following proposition:
\begin{prop}
Let $B_1=\{\mathbf{e}_1\,\dots,\mathbf{e}_{d'}\}$ be a set of
linearly independent vectors in $\mathbb{R}^{d'}$, $d'\le d$, such
that $\|\mathbf{e}_i\|=1,\,i=1,\dots,d'$, let $B'=
\{\mathbf{f}_1'\,\dots,\mathbf{f}'_{d'}\}$ be the set of orthogonal
vectors produced the Gram-Schmidt procedure:
$\mathbf{f}_i'=\mathbf{e}_i-\frac{\sum_{m=1}^{i-1}(\mathbf{e}_i\cdot\mathbf{f}_m')\mathbf{f}_m'}{\|\mathbf{f}_m'\|^2}$.
Finally consider $B_2 = \{\mathbf{f}_i =
\frac{\mathbf{f_i'}}{\|\mathbf{f_i'}\|},i=1,\dots,d'\}$ the
orthonormal version of $B'$ and let $F$ be the $d\times d'$ matrix
whose $i$-th column is $\mathbf{f}_i$.

Suppose $g:\mathbb{R}^d\rightarrow\mathbb{R}$ such that
$\mathbb{E}[g^2(\mathbf{Z})]<+\infty$ and consider two vectors in
$\mathbb{R}^{d'}$,
$\mathbf{a}^{\pm}=\{a_1^{\pm},\dots,a_d'^{\pm}\}$, such that
$a_i^-<a_i^+,\,\forall i=1,\dots,d'$. We have
\begin{align}\label{nonorth:eq}
&\mathbb{E}\left[g(\mathbf{Z})\left|a_i^-\le
\mathbf{Z}\cdot\mathbf{e}_i\le a_i^+,\,i=1,\dots,d'
 \right. \right] \nonumber\\
& =\mathbb{E}\left[
                g\Bigg(
                \left(I - FF^{T}\right)\mathbf{Z} \right.
                \nonumber\\
                &\left.\left.
                +\sum_{m=1}^{d'}\mathbf{f}_m\Phi^{-1}\left(
                \Phi\left(\tilde{a}_m^-\left(\mathbf{U}^{(m-1)}\right)
                                \right) +
               U_m\left(\Phi\left(\tilde{a}_m^+\left(\mathbf{U}^{(m-1)}\right)\right) -
               \Phi\left(\tilde{a}^-_m\left(\mathbf{U}^{(m-1)}\right)\right)\right)
                \right)\right)
            \right.
\nonumber\\
&\times\left.\frac{\prod_{m=1}^{d'}\left(\Phi\left(\tilde{a}^+_m\left(\mathbf{U}^{(m-1)}\right)\right)
-
\Phi\left(\tilde{a}^-_m\left(\mathbf{U}^{(m-1)}\right)\right)\right)}{\mathbb{P}\left(a_i^-\le\mathbf{Z}\cdot\mathbf{e}_i\le
a_i^+,\,i=1,\dots,d'\right)} \right]
\end{align}
\noindent where
\begin{equation}
\tilde{a}^{\pm}_i\left(\mathbf{U}^{(i-1)}\right) =
\frac{a_i^{\pm}-\sum_{j=1}^{i-1}\mathbf{e}_i\cdot\mathbf{f}_j
\Phi^{-1}\left(
\Phi\left(\tilde{a}_j^-\left(\mathbf{U}^{(j-1)}\right)
                \right) +
U_j\left(\Phi\left(\tilde{a}_j^+\left(\mathbf{U}^{(j-1)}\right) -
\Phi\left(\tilde{a}^-_j\left(\mathbf{U}^{(j-1)}\right)\right)\right)
\right)\right)}{\|f'_i\|}
\end{equation}
with the notation, $\mathbf{U}^{(j)}=\left(U_1,\dots,U_j\right),
j=1\dots,n$ and $U_1,\dots,U_{d'}$ i.i.d. uniformly distributed
random variables, all independent on $\mathbf{Z}$; we assume $U_0=0$
and $\tilde{a}^{\pm}_1= a^{\pm}_1$.
\end{prop}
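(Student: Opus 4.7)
The plan is to decompose $\mathbf{Z}$ along the orthonormal basis produced by Gram--Schmidt and its orthogonal complement, rewrite the rectangular event $\{a_i^-\le \mathbf{Z}\cdot\mathbf{e}_i\le a_i^+\}$ as a \emph{triangular} system of constraints on the coordinates, and then evaluate the restricted Gaussian integral by an iterated inverse-CDF change of variables that turns it into an unrestricted integral over $[0,1]^{d'}$ against i.i.d.\ uniforms. The factor $\prod_m\big(\Phi(\tilde a_m^+)-\Phi(\tilde a_m^-)\big)/\mathbb{P}(\mathrm{event})$ is then just the combined Jacobian divided by the normalising constant of the conditioning.

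First, I would set $W_m=\mathbf{Z}\cdot\mathbf{f}_m$, $\mathbf{Z}^\perp=(I-FF^T)\mathbf{Z}$, and use $F^TF=I_{d'}$ to see that $(W_1,\dots,W_{d'})=F^T\mathbf{Z}\sim\mathcal{N}(\mathbf{0},I_{d'})$ and that $\mathrm{Cov}(\mathbf{Z}^\perp,F^T\mathbf{Z})=(I-FF^T)F=0$, so $\mathbf{Z}^\perp$ is independent of $(W_1,\dots,W_{d'})$. Together with $\mathbf{Z}=\mathbf{Z}^\perp+\sum_{m=1}^{d'}W_m\mathbf{f}_m$, this reduces the problem to one about the Gaussian vector $(W_1,\dots,W_{d'})$ alone, with $\mathbf{Z}^\perp$ carried as an independent ``spectator''. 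Next, the Gram--Schmidt identity gives $\mathbf{e}_i=\|\mathbf{f}'_i\|\mathbf{f}_i+\sum_{j<i}(\mathbf{e}_i\cdot\mathbf{f}_j)\mathbf{f}_j$ (noting $\mathbf{e}_i\cdot\mathbf{f}_i=\|\mathbf{f}'_i\|$), so
\begin{equation*}
\mathbf{Z}\cdot\mathbf{e}_i=\|\mathbf{f}'_i\|\,W_i+\sum_{j=1}^{i-1}(\mathbf{e}_i\cdot\mathbf{f}_j)W_j,
\end{equation*}
which rewrites the conditioning event as the lower-triangular system $\tilde a_i^-(W^{(i-1)})\le W_i\le \tilde a_i^+(W^{(i-1)})$, $i=1,\dots,d'$, with the exact $\tilde a_i^{\pm}$ appearing in the statement.

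The heart of the proof is then an iterated substitution. Writing out $\mathbb{E}[g(\mathbf{Z})\indic_{\mathrm{event}}]$ and conditioning on $\mathbf{Z}^\perp$ via independence, one obtains
\begin{equation*}
\mathbb{E}[g(\mathbf{Z})\indic_{\mathrm{event}}]=\mathbb{E}\!\left[\int_{\mathbb{R}^{d'}}g\!\left(\mathbf{Z}^\perp+\sum_m w_m\mathbf{f}_m\right)\prod_{m=1}^{d'}\phi(w_m)\indic_{\tilde a_m^-(w^{(m-1)})\le w_m\le \tilde a_m^+(w^{(m-1)})}\,dw\right].
\end{equation*}
I would then apply, for $m=1,2,\dots,d'$ in order, the substitution $w_m=\Phi^{-1}\bigl(\Phi(\tilde a_m^-(w^{(m-1)}))+u_m\Delta_m(w^{(m-1)})\bigr)$ with $\Delta_m=\Phi(\tilde a_m^+)-\Phi(\tilde a_m^-)$ and $u_m\in[0,1]$. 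For each $m$ one has $\phi(w_m)\,dw_m=\Delta_m(w^{(m-1)})\,du_m$ and the indicator collapses to $\indic_{[0,1]}(u_m)$; because $\tilde a_m^{\pm}$ depends only on $w_1,\dots,w_{m-1}$ the Jacobian is lower triangular, so the substitutions compose without cross terms. Recognising the resulting integral over $[0,1]^{d'}$ as an expectation under i.i.d.\ uniforms $U_1,\dots,U_{d'}$, and dividing both sides by $\mathbb{P}(\mathrm{event})$, yields exactly the stated identity.

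The only real obstacle I anticipate is bookkeeping: tracking how each $\tilde a_m^{\pm}$, after substitution, becomes the nested function of $U_1,\dots,U_{m-1}$ that appears in the statement, and verifying that the triangular Jacobian justifies performing the substitutions one variable at a time rather than simultaneously. Conceptually there is nothing delicate, but the notation is heavy and one must be careful that the intermediate quantities $\Delta_m(W^{(m-1)})$ are almost surely positive so that the inverse CDF is well defined, which follows from the fact that the event has positive probability (else the statement is trivial).
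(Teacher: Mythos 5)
Your proof is correct and follows essentially the same route as the paper's: rewrite the conditioning event as a triangular system in the Gram--Schmidt coordinates, then apply the one-dimensional truncated-normal inverse-CDF construction coordinate by coordinate, picking up the factor $\prod_m\bigl(\Phi(\tilde a_m^+)-\Phi(\tilde a_m^-)\bigr)/\mathbb{P}(\mathrm{event})$. The paper carries out the iteration as nested conditional expectations plus Fubini (working out $d'=2$ explicitly and invoking iteration), whereas you make the decomposition $\mathbf{Z}=(I-FF^T)\mathbf{Z}+\sum_m W_m\mathbf{f}_m$ and the triangular change of variables explicit for general $d'$ --- a presentational difference only.
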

\begin{rem}
The above result requires the computation of the joint probability
$\mathbb{P}\left(a_i^-\le\mathbf{Z}\cdot\mathbf{e}_i\le
a_i^+,\,i=1,\dots,d'\right)$ where the random variables
$\mathbf{Z}\cdot\mathbf{e}_i,\,i=1,\dots,d'$ are not independent; in
contrast, this term is not necessary for the estimation of
$\mathbb{E}\left[g(\mathbf{Z})\right]$. Indeed, suppose $K$ strata,
by conditioning we have:
\begin{equation}
\mathbb{E}\left[g(\mathbf{Z})\right]=
\sum_{k=1}^K\mathbb{E}\left[g(\mathbf{Z})\left|\mathbf{Z}\in
\text{$k$-th stratum}\right.\right]\mathbf{P}\left(\mathbf{Z}\in
\text{$k$-th stratum}\right),
\end{equation}
then plugging in the conditional expectation the result of equation
(\ref{nonorth:eq}) the probabilities at the numerator and at the
denominator simplify out.
\end{rem}
\begin{proof}
For simplicity we suppose $d'=2$, the Gram-Schmidt procedure returns
$\mathbf{f}'_1 = \mathbf{f}_1 = \mathbf{e}_1$,
$\mathbf{f}'_2=\mathbf{e}_2-(\mathbf{e}_1\cdot\mathbf{e}_2)\mathbf{e}_1$
and $\mathbf{f}_2 = \frac{\mathbf{f}'_2}{\|\mathbf{f}'_2\|}$. It
follows that
\begin{equation*}
\mathbb{E}\left[g(\mathbf{Z}) \left|
\begin{array}{c}
a_1^-\le\mathbf{Z}\cdot\mathbf{e}_1\le a_1^+\\
a_2^-\le\mathbf{Z}\cdot\mathbf{e}_2\le a_2^+
\end{array}
\right. \right]=\mathbb{E}\left[g(\mathbf{Z}) \left|
\begin{array}{c}
a_1^-\le\mathbf{Z}\cdot\mathbf{f}_1\le a_1^+\\
\frac{a_2^- -(\mathbf{e}_1\cdot\mathbf{e}_2)\mathbf{e}_1\cdot
\mathbf{Z}}{\|f'_2\|} \le\mathbf{Z}\cdot\mathbf{f}_2\le \frac{a_2^+
- (\mathbf{e}_1\cdot\mathbf{e}_2)\mathbf{e}_1\cdot
\mathbf{Z}}{\|f'_2\|}
\end{array}
\right. \right].
\end{equation*}
Based on the results of the Section \ref{sect:SLP} and the
properties of the conditional expectation, the previous expression
equals:
\begin{equation}
C\mathbb{E}\left[g\left(\mathbf{Z} +
\mathbf{f}_1\left(\Phi^{-1}\left(\Phi(a_1^-)+U_1\left(\Phi(a_1^+)-\Phi(a_1^-)\right)
\right)-\mathbf{f}_1\cdot\mathbf{Z}\right)\right)\indic_{\tilde{a}^-_2(U_1)\le\mathbf{f}_2\cdot\mathbf{Z}\le\tilde{a}^+_2(U_1)}\right]
\end{equation}
where $$ C = \frac{\Phi(a_1^+) -
\Phi(a_1^-)}{\mathbb{P}\left(\begin{array}{c}
a_1^-\le\mathbf{Z}\cdot\mathbf{e}_1\le a_1^+\\
a_2^-\le\mathbf{Z}\cdot\mathbf{e}_2\le a_2^+ \end{array}\right)}.
$$
The expected value is then:
\begin{align*}
&\int_0^1\mathbb{E}\left[g\left(\mathbf{Z} +
\mathbf{f}_1\left(\Phi^{-1}\left(\Phi(a_1^-)+u_1\left(\Phi(a_1^+)-\Phi(a_1^-)\right)
\right)-\mathbf{f}_1\cdot\mathbf{Z}\right)\right)\indic_{\tilde{a}^-_2(u_1)\le\mathbf{f}_2\cdot\mathbf{Z}\le\tilde{a}^+_2(u_1)}\right]du_1\\
&\qquad=\int_0^1\mathbb{E}\left[g\Bigg(\mathbf{Z} +
\mathbf{f}_1\left(\Phi^{-1}\left(\Phi(a_1^-)+u_1\left(\Phi(a_1^+)-\Phi(a_1^-)\right)
\right)-\mathbf{f}_1\cdot\mathbf{Z}\right)\right.\\
&\qquad+\left.\mathbf{f}_2\left(\Phi^{-1}\left(\Phi(\tilde{a}^-_2(u_1))+U_2\left(\Phi(\tilde{a}^+_2(u_1))-
\Phi(\tilde{a}^-_2(u_1))\right)
\right)-\mathbf{f}_2\cdot\mathbf{Z}\right)\Bigg)\right.\\
&\qquad\times\left.\left(\Phi(\tilde{a}^+_2(u_1))-\Phi(\tilde{a}^-_2(u_1))\right)\right]du_1\\
&\qquad=\mathbb{E}\left[g\Bigg(\mathbf{Z} +
\mathbf{f}_1\left(\Phi^{-1}\left(\Phi(a_1^-)+U_1\left(\Phi(a_1^+)-\Phi(a_1^-)\right)
\right)-\mathbf{f}_1\cdot\mathbf{Z}\right)\right.\\
&\qquad+\left.\mathbf{f}_2\left(\Phi^{-1}\left(\Phi(\tilde{a}^-_2(U_1))+U_2\left(\Phi(\tilde{a}^+_2(U_1))-
\Phi(\tilde{a}^-_2(U_1))\right)
\right)-\mathbf{f}_2\cdot\mathbf{Z}\right)\Bigg)\right.\\
&\qquad\times\left.\left(\Phi(\tilde{a}^+_2(U_1))-\Phi(\tilde{a}^-_2(u_1))\right)
\right].
\end{align*}
Rearranging the terms in $\mathbf{Z}$ we get equation
(\ref{nonorth:eq}) for $d'=2$. The result for $d'$ direction is
obtained iterating the steps above.
\end{proof}
\section{Convenient Directions}\label{sect:CD}
Given an allocation rule, the crucial point in the stratification of
linear projections is the choice of the directions of
stratification. Indeed, stratified sampling eliminates the sampling
variability across strata without affecting the sampling variability
within strata. Good directions are characterized by their higher
capacity to dissect the state space into strata where the integrand
function is nearly constant. In the following we describe the
approaches that we adopt in order to find the directions of
stratification.

\subsection{Principal Component Directions}
Suppose we want to find the singled-factor approximation of a
$d$-dimensional Gaussian random vector
$\mathbf{X}\sim\mathcal{N}(0,\Sigma)$ that maximizes the variance of
$\mathbf{v}\cdot\mathbf{X}$. This is equivalent to the following
optimization problem:
\begin{equation}
  \arg\max_{\|\mathbf{v}\|=1} \quad \mathbf{v}\cdot \Sigma \mathbf{v}
\end{equation}
\noindent Suppose $\lambda_1\ge\dots\ge\lambda_d$ represent the
eigenvalues  of $\Sigma$ in increasing order, and
$\mathbf{e}_1,\dots,\mathbf{e}_d$ their associated eigenvectors,
then the optimization above is solved by $\mathbf{v}^*=\mathbf{e}_1$
an eigenvector associated to the largest eigenvalue $\lambda_1$.

As $\mathbf{e}_1$ produces the linear combination
$\mathbf{e}_1\cdot\mathbf{X}$ that best captures the variability of
the components of $\mathbf{X}$. We may choose this vector as the
first direction of stratification. In the case we would consider
multiple stratification, we can iterate the optimization above. This
means that we would consider $\mathbf{e}_j, j=1,\dots,d$, associated
to the $j$-th eigenvalue, as the $j$-th direction of stratification.
Indeed, in the statistical literature, the linear combinations
$\mathbf{e}_j\cdot\mathbf{X}, j=1,\dots,d$, are called the principal
components of $\mathbf{X}$. The variance explained by the first
$k\le d$ principal components is the ratio:
\begin{equation*}
    \frac{\sum_{i=1}^k \lambda_i}{\sum_{i=1}^d \lambda_i}
\end{equation*}
Finally, we note that this procedure based on the PCA only produces
orthogonal directions.

\subsection{Law of Total Variance and GHS Directions}
In this section we illustrate the law of total variance and we
briefly describe the strategy to select optimal directions
illustrated in Glasserman et al. \cite{GHS99}. Given two random
vectors $\mathbf{X}_1$ and $\mathbf{X}_2$ of dimension  $d_1$ and
$d_2$, respectively, and a
 function $g:\mathbb{R}^{d_1}\rightarrow\mathbb{R}$, if
 $\mathbb{E}[g(\mathbf{X})^2]<\infty$, the law of total variance reads as:
 \begin{equation}\label{eq:totVar}
 \Var\left[g(\mathbf{X}_1)\right] = \mathbb{E}\left[\Var[g(\mathbf{X}_1)|\mathbf{X}_2]\right] +
 \Var[\mathbb{E}[g(\mathbf{X}_1)|\mathbf{X}_2]].
 \end{equation}
Usually, in the context of linear model, the two terms  are known as
the ``unexplained'' and the ``explained'' components of the
variance, respectively. In our case, $\mathbf{X}_1$ is a standard
normal random vector $\mathbf{Z}$ and
$\mathbf{X}_2=\mathbf{v}\cdot\mathbf{Z}$ where
$\mathbf{v}\in\mathbb{R}^{d}$. It is well known that stratification
eliminates the ``explained'' component of the variance up to terms
with order $o(1/N_S)$, where $N_S$ is the total number of draws (see
for instance Lemma 4.1 in Glasserman et al. \cite{GHS99}). Hence, a
good direction candidate is the one that maximizes the ``explained''
component of the variance or minimizes the ``unexplained'' part.

Such an optimal direction is then the solution of the following
optimization problem:
\begin{equation}\label{eq:optNew}
\mathbf{v}^*=\arg\min_{\mathbf{v}\in\mathbb{R}^d,\|\mathbf{v}\|=1}\int_{\mathbb{R}^d}\Var\left[g(\mathbf{Z})\Big|\mathbf{v}\cdot\mathbf{Z}=x\right]p_X(x)dx,
\end{equation}
where $p_X$ is the density of $X=\mathbf{v}\cdot\mathbf{Z}$.

The approach proposed in  Glasserman et al. \cite{GHS99} is to adopt
directions that are optimal for the quadratic approximation of the
logarithm of the integrand function. Glasserman et al. \cite{GHS99}
considered $g(\mathbf{z})=\exp{\left(\frac{1}{2}\mathbf{z}\cdot
B\mathbf{z}\right)}$ with $B$ non-singular symmetric matrix whose
eigenvalues $\lambda_1,\dots,\lambda_d$ are all less than $1/2$. Now
number the eigenvalues and eigenvectors of the matrix $B$ so that
\begin{equation}
\left(\frac{\lambda_1}{1-\lambda_1}\right)^2\ge\left(\frac{\lambda_2}{1-\lambda_2}\right)^2\ge\left(\frac{\lambda_d}{1-\lambda_d}\right)^2.
\end{equation}
Glasserman et al. \cite{GHS99} proved that the optimal direction
$\mathbf{v}^*$ is the eigenvector $\mathbf{e}_1$ of the matrix $B$
associated with the eigenvalue $\lambda_1$. When one considers
multiple stratification, the $j$-th optimal direction is the
eigenvector $\mathbf{e}_j$ associated with the eigenvalue
$\lambda_j$. Since the directions are the eigenvectors of the matrix
$B$, the GHS approach only produces orthogonal directions.

When the logarithm of the integrand function is not quadratic, one
could evaluate its Hessian at the certain point. Glasserman et al.
\cite{GHS99} proposed to calculate the Hessian at a point used for
an importance sampling procedure. This last operation might be
really computationally expansive, in particular if $d$ is large. It
depends on a non-convex optimization procedure and cannot always be
easily applied to realistic situations arising in finance. In
addition, in financial applications, payoff functions (integrand
functions) are far to be quadratic. In contrast, Etoré et al.
\cite{EFJM09} found the directions by adaptive techniques that in
some cases outperform the above approach. However, the numerical
procedure still remains computationally intensive. These drawbacks
motivate our study where our main purpose is to investigate
convenient multiple stratification directions that provide
comparable variance reductions with a notable advantage from the
computational point of view.


\subsection{Linear Approximations}\label{sec:LA}
In this section we describe a different approach, that we name
Linear Approximation (LA), in order to find convenient directions
for the stratification of linear projections.

Suppose $g\in\mathcal{C}^1$, this approach is based on a linear
approximation of the function $g$ that leads to an approximation of
the ``unexplained'' component of the variance. Then, we can
approximate the optimization problem (\ref{eq:optNew}) as:
\begin{eqnarray}\label{eq:Prop1}
\int_{\mathbb{R}^n}\nabla g(\mathbf{0})\cdot
\Var\left[\mathbf{Z}\Big|\mathbf{Z}\cdot\mathbf{v}=x\right]\nabla
g(\mathbf{0})p_X(x)dx,
\end{eqnarray}
where we also use the approximation $\nabla
g(\mathbb{E}[\mathbf{Z}\Big|\mathbf{Z}\cdot\mathbf{v}=x])\approx\nabla
g(\mathbb{E}[\mathbf{Z}])$, that is we evaluate the gradient at the
expected value of $\mathbf{Z}$ (zero for each component) instead of
its conditional one.
%
The solution of the optimization problem (\ref{eq:Prop1}) is given
by the following proposition:
\begin{prop}\label{prop:LA}
The optimal direction $\mathbf{v}^*$ of the optimization problem
(\ref{eq:Prop1}) is:
\begin{equation}
\mathbf{v}* = \pm\frac{\nabla g(\mathbf{0})}{\|\nabla
g(\mathbf{0})\|}
\end{equation}
\end{prop}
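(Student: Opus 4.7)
The plan is to exploit the explicit formula for the conditional covariance of $\mathbf{Z}$ given a linear projection, which is already derived in Section \ref{sect:SLP}, equation (\ref{eq:strat1dim}). In that equation, for $\Sigma_Z = I_d$ and $\|\mathbf{v}\|=1$, the conditional law is $\mathcal{N}(\mathbf{v}x, I_d - \mathbf{v}\mathbf{v}^T)$, so the conditional covariance matrix is the $x$-independent projector $I_d - \mathbf{v}\mathbf{v}^T$.

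The first step is therefore to substitute $\mathrm{Var}[\mathbf{Z}\mid \mathbf{Z}\cdot\mathbf{v}=x] = I_d - \mathbf{v}\mathbf{v}^T$ into the objective of (\ref{eq:Prop1}). Since the covariance no longer depends on $x$, the integral against $p_X$ factors out and evaluates to $1$, reducing the problem to
\[
\min_{\|\mathbf{v}\|=1}\ \nabla g(\mathbf{0})\cdot(I_d - \mathbf{v}\mathbf{v}^T)\nabla g(\mathbf{0})
= \|\nabla g(\mathbf{0})\|^2 - \bigl(\mathbf{v}\cdot\nabla g(\mathbf{0})\bigr)^2.
\]
The second step is to recognize that minimizing this quantity over the unit sphere is equivalent to maximizing $(\mathbf{v}\cdot\nabla g(\mathbf{0}))^2$, and by the Cauchy--Schwarz inequality the maximum is attained precisely when $\mathbf{v}$ is colinear with $\nabla g(\mathbf{0})$, giving $\mathbf{v}^* = \pm\nabla g(\mathbf{0})/\|\nabla g(\mathbf{0})\|$.

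There is really no obstacle here: once the identity from Section \ref{sect:SLP} is invoked, the statement collapses to a one-line application of Cauchy--Schwarz. The only minor caveat worth mentioning in the write-up is that the argument implicitly requires $\nabla g(\mathbf{0}) \neq \mathbf{0}$ for the optimizer to be well defined (otherwise every unit vector is optimal and the linear approximation carries no directional information), and that the sign ambiguity reflects the symmetry $\mathbf{v} \mapsto -\mathbf{v}$ of the stratification construction, which leaves the family of strata unchanged.
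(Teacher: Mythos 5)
Your proof is correct and follows essentially the same route as the paper: substitute the $x$-independent conditional covariance $I_d-\mathbf{v}\mathbf{v}^T$, reduce the objective to $\|\nabla g(\mathbf{0})\|^2-(\mathbf{v}\cdot\nabla g(\mathbf{0}))^2$, and maximize the squared inner product by taking $\mathbf{v}$ colinear with $\nabla g(\mathbf{0})$. Your added remarks on the nondegeneracy condition $\nabla g(\mathbf{0})\neq\mathbf{0}$ and the sign symmetry are sensible but do not change the argument.
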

\begin{proof}
Developing equation (\ref{eq:Prop1}) we get:
\begin{eqnarray}
\int_{\mathbb{R}^d}\nabla g(\mathbf{0})\cdot
\Var\left[\mathbf{Z}\Big|X=x\right]\nabla g(\mathbf{0})p_X(x)dx&=&
\int_{\mathbb{R}^d}\nabla g(\mathbf{0})\cdot
(I-\mathbf{v}^T\mathbf{v})\nabla g(\mathbf{0})p_X(x)dx=
\nonumber\\
&&\|\nabla g(\mathbf{0})\|^2-\nabla
g(\mathbf{0})\cdot\mathbf{v}^T\mathbf{v}\nabla g(\mathbf{0}).
\end{eqnarray}
The minimization problem is equivalent to maximize the second term
that can be written as $\left(\nabla
g(\mathbf{0})\cdot\mathbf{v}\right)^2$. The maximum of this dot
product is attained when the two vectors are parallel. The optimal
direction is then obtained by normalization.
\end{proof}
Multiple directions in the LA procedure can be produced calculating
the gradient at different points. For example, we might iteratively
consider $\mathbf{Z}_2=\nabla g\left(\nabla
g(\mathbf{0})\right),\dots,\mathbf{Z}_{d'}=\nabla g\left(\nabla
g(\mathbf{Z}_{d'-1})\right)$ in order to capture higher order
components. We remark that the LA approach does provide
non-orthogonal directions.

\subsection{Linear Transformations}
The LT procedure, proposed by Imai and Tan \cite{IT2006}, is
originally conceived to enhance the accuracy of simulation
techniques that employ low-discrepancy sequences also known as
Quasi-Monte Carlo (QMC) methods. Indeed, given
$\mathbf{Z}\sim\mathcal{N}(0,I_d)$, the variance of the MC
estimation of the expected value $\mathbb{E}[g(\mathbf{Z})]$ does
not change if we replace $\mathbf{Z}$ by $A \epsilon$ where
$\epsilon\sim\mathcal{N}(0,I_d)$ and $A$ is a $d\times d$ orthogonal
matrix, $AA^T=I_d$, while the choice of $A$ can deeply affect the
accuracy of QMC simulations (see for instance Papageorgiou
\cite{Pap02}). The Imai and Tan's choice is  such that $A$ minimizes
the effective dimension in the truncation sense defined in Caflisch
et al. \cite{CMO1997} of the integrand function. In our context, the
columns of $A$ will be chosen as the orthogonal directions of
stratification.

We briefly describe the LT algorithm. Consider a $d$ dimensional
normal random vector $\mathbf{T}\sim \mathcal{N}(\mu;\Sigma)$, a
vector $\mathbf{w}=(w_1,\dots,w_d)\in\mathbb{R}^d$ and let
$f(\mathbf{T}) =\sum_{i=1}^d w_iT_i$ be a linear combination of
$\mathbf{T}$. Let $C$ be such that $\Sigma=CC^T$ and assume
$\epsilon\sim\mathcal{N}(0,I_d)$ with
$\mathbf{T}\stackrel{\mathcal{L}}{=}C\epsilon$. The LT approach
considers $C$ as $C=C^{\text{LT}}=C^{\text{CH}}A$, with
$C^{\text{CH}}$ the Cholesky decomposition of $\Sigma$. Then, in the
linear case, we can define:
\begin{equation}\label{4.3.4}
    g^{A}(\epsilon):= f(C^{\text{CH}}A\epsilon) = \sum_{k=1}^d \alpha_k \epsilon_k + \mu\cdot
    \mathbf{w},
\end{equation}
\noindent where $\alpha_k= \mathbf{C^{\text{LT}}_{\cdot k}}\cdot
\mathbf{w}=\mathbf{A_{\cdot k}}\cdot\mathbf{B},\,k=1\dots,d$ and
$\mathbf{B}=(C^{\text{CH}})^T\mathbf{w}$ while $\mathbf{C_{\cdot
k}}$ and $\mathbf{A_{\cdot k}}$ are the $k$-th columns of the matrix
$C$ and $A$, respectively. In the linear case, setting
\begin{equation}\label{4.3.8}
    \mathbf{A_{\cdot 1}^*} = \pm\frac{\mathbf{B}} {\|\mathbf{B}\|},
\end{equation}
with arbitrary remaining columns with the only constrain that
$AA^T=I_d$, leads to the following expression:
\begin{equation}
g^A(\epsilon)=\mu\cdot\mathbf{w}\pm\|\mathbf{B}\|\epsilon_1.
\end{equation}
This is equivalent to reduce the effective dimension in the
truncation sense to $1$ and this means to maximize the variance of
the first component $\epsilon_1$.

In a non-linear framework, we can use the LT construction, which
relies on the first order Taylor expansion of $g^A$:
\begin{equation}\label{4.3.12}
    g^A(\epsilon) \approx g^A(\hat{\epsilon}) +
    \sum_{l=1}^d\frac{\partial
    g^A(\hat{\epsilon})}{\partial\epsilon_l}\Delta\epsilon_l.
\end{equation}
\noindent The approximated function is linear in the standard normal
random vector $\Delta\epsilon\sim\mathcal{N}(0,I_d)$ and we can rely
on the considerations above. The first column of the matrix $A^*$ is
then:
\begin{equation}
\mathbf{A_{\cdot 1}}^* =   \arg\max_{\mathbf{A_{\cdot 1}}\in
\mathbf{R^d}}\left(\frac{\partial
    g^A(\hat{\epsilon})}
    {\partial\epsilon_1}\right)^2
\end{equation}
Since we have already maximized the variance contribution for
$\left(\frac{\partial
g^A(\hat{\epsilon})}{\partial\epsilon_1}\right)^2$, in order to
improve the method using adequate columns  we might consider the
expansion of $g$ about $d-1$ different points. More precisely Imai
and Tan \cite{IT2006} propose to maximize:
\begin{equation}\label{4.3.13}
\mathbf{A_{\cdot k}}^* =   \arg\max_{\mathbf{A_{\cdot k}}\in
\mathbf{R^d}}\left(\frac{\partial
    g^A(\hat{\epsilon}_k)}
    {\partial\epsilon_k}\right)^2
\end{equation}
subject to $\|\mathbf{A_{\cdot k}}^*\|=1$ and $\mathbf{A_{\cdot
j}}^*\cdot \mathbf{A_ {\cdot k}}^*=0, j=1,\dots,k-1, k\le d$.

Although equation (\ref{4.3.8}) provides an easy solution at each
step, the correct procedure requires that the column vector
$\mathbf{A_{\cdot k}}^*$ is orthogonal to all the previous (and
future) columns. Imai and Tan \cite{IT2006} propose to choose
$\hat{\epsilon}=\hat{\epsilon}_1=\mathbb{E}[\epsilon]=\mathbf{0}$,
$\hat{\epsilon}_2=(1,0,\dots,0),\dots\hat{\epsilon}_{k}=(1,1,1,\dots,0,\dots,0)$,
where the $k$-th point has $k-1$ leading ones.  Sabino \cite{Sab08}
illustrated an economic and convenient implementation of the LT
algorithm by an iterative QR decomposition that we will use to find
the directions of stratification. This method is computationally
more expensive than the LA and it is not clear if it admits a
solution when the sequence of expansion points is different from the
one described above.

\section{Financial Applications}\label{sect:FA}
In this section we illustrate how to calculate the convenient
directions introduced above in the context of option pricing. We
consider two price-dynamics:
\begin{itemize}
\item BS dynamics for $M$ risky assets with constant volatilities:
\begin{equation}
dS_{i}\left( t\right) =rS_{i}\left( t\right) dt+\sigma
_{i}S_{i}\left( t\right)\,dW_{i}\left( t\right) ,\quad
S_{i}(0)=S_{i0},\qquad i=1,\dots ,M,
\end{equation}
$S_{i}\left( t\right) $ denotes the $i$-th asset price at  time $t$,
$\sigma _{i}$ represents the volatility of the $i$-th asset return,
$r$ is the risk-free rate, and $\mathbf{W}\left( t\right) =\left(
W_{1}\left( t\right) ,\dots ,W_{M}\left( t\right) \right) $ is a
$M$-dimensional Brownian motion such that $dW_{i}(t)dW_{k}(t)=\rho
_{ik}dt,\, i,k = 1,\dots ,M$. When $M=1$ we simply denote
$S(t)=S_1(t)$.
\item CIR dynamics:
\begin{equation}\label{CIRDyn}
dS(t) = \alpha\left(\mu-S(t)\right)dt+\sigma\sqrt{S(t)}dW(t),\quad
S(0)=S_0,
\end{equation}
\noindent with $S_0, \alpha,\mu,\sigma$ positive constants. We
impose the condition $2\alpha\mu>\sigma^2$ in order to ensure that
$S(t)$ remains positive.
\end{itemize}
Applying the risk-neutral pricing formula (see Lamberton and Lapeyre
\cite{LL96}), the calculation of the price at  time $t$ of any
European derivative contract with maturity date $T$ boils down to
the evaluation of an (discounted) expectation:
\begin{equation}
a(t) =
\exp\left(-r(T-t)\right)\mathbb{E}\left[\psi\right|\mathcal{F}_t]\label{1.3}\text{,}
\end{equation}
\noindent  the expectation is under the risk-neutral probability
measure and $\psi$ is a generic $\mathcal{F}_T$-measurable variable
that determines the payoff of the contract.

We show how to derive the convenient directions of stratification
for the following derivative contracts:
\begin{enumerate}
\item discretely monitored Asian basket options:
\begin{equation}\label{eq:AsianPayoff}
a\left( t\right) =\exp\left(-r(T-t)\right)\mathbb{E}\left[\left(
\sum_{i=1}^{M}\sum_{j=1}^{N}w_{ij}\,S_{i}\left( t_{j}\right)
-K_S\right)^+\bigg|\mathcal{F}_t\right]\quad\text{Option on a
Basket}
\end{equation}%
\noindent where $x^+ =max(x,0)$, $t_1<t_2\dots<t_N=T$ is a time
grid, the coefficients $w_{ij}$ satisfy $\sum_{i,j}w_{ij}=1$ and
$K_S$ is the strike price. When $N=1$ and $M>0$ the option is known
as basket option while if $M=1$ and $N>0$ it is simply known as
Asian option.
\item Asian option with knock-out barrier at expiry $T$:
\begin{equation}\label{eq:AsianBarrierExpiry}
a\left( t\right) =\exp\left(-r(T-t)\right)\mathbb{E}\left[\left(
\frac{1}{N}\sum_{j=1}^{N}\,S\left( t_{j}\right)
-K_S\right)^+\indic_{S(T)<B}\bigg|\mathcal{F}_t\right]
\end{equation}
 where $B$ represents the value of the barrier.
\item Asian option with knock-out barrier at each monitoring time:
\begin{equation}\label{eq:AsianBarrier}
a\left( t\right) =\exp\left(-r(T-t)\right)\mathbb{E}\left[\left(
\frac{1}{N}\sum_{j=1}^{N}\,S\left( t_{j}\right)
-K_S\right)^+\indic_{S\left( t_{j}\right)<B\,,\forall
j=1,\dots,N}\bigg|\mathcal{F}_t\right]
\end{equation}
 where $B$ represents the value of the barrier.
\end{enumerate}
\subsection{Linear Transformation in the Black-Scholes Market}
Suppose the BS dynamics with constant volatilities and a time grid
$t_1<t_2\dots<t_N=T$, the elements of the autocorrelation matrix
$\Sigma_B$ of the Brownian motion are
$(\Sigma_B)_{jn}=\min(t_j,t_n),\,j,n=1,\dots,N$. Moreover, denote
$\Sigma_A$ the a covariance matrix whose elements are
$(\Sigma_A)_{im}=\sigma_i\rho_{im}\sigma_m$, $i,m=1,\dots,M$, and
consider $\Sigma_{MN}=\Sigma_B\otimes\Sigma_A$ where $\otimes$
denotes the Kronecker product. Given $\epsilon\sim{N}(0,I_{MN})$ and
$C^{\text{LT}}=C^{\text{CH}}A$ such that
$C^{\text{CH}}(C^{\text{CH}})^T=\Sigma_{MN}$ and $AA^T=I_{MN}$, the
payoff of an Asian basket option can written as:
\begin{equation}
\psi = \left(g(\epsilon)-K_S\right)^+\label{eq:LT1}\quad
\text{where}\quad
 g(\epsilon)=\sum_{k=1}^{MN}\exp\left\{\mu_k + \sum_{l=1}^{MN}C^{\text{LT}}_{kl}\epsilon_l \right\}
\end{equation}
and
\begin{equation}
\mu_k = \ln(w_{k_1k_2}S_{k_1}(0)) + \bigg(r-\frac{\sigma_{k_1}^2}{2}
\bigg)t_{k_2}\label{eq:LT3}
\end{equation}
where the indexes $k_1$ and $k_2$ are $k_1=(k-1)\text{modulo}M + 1,
k_2 = \lfloor(k-1)/M\rfloor+1$, respectively and $\lfloor x\rfloor$
denotes the greatest integer less than or equal to $x$.

Since the Asian payoff function is not  everywhere differentiable,
the LT procedure is applied to its differentiable part $g$ (or
$g-K_S$). This is done also for the other barrier-style Asian
options, hence we obtain the same directions of stratification  for
the three types of derivative contracts. Hereafter we detail the
adopted procedure:
\begin{enumerate}
\item
Expand $g$ up to the first order:
\begin{equation}\label{eq:LT5}
    g(\epsilon) \cong g(\hat{\epsilon}) +
    \sum_{l=1}^{NM}\left(\sum_{i=1}^{NM}\exp\left(\mu_i+\sum_{k=1}^{NM}C^{\text{LT}}_{ik}\hat{\epsilon}_k\right)C^{\text{LT}}_{il}\right)\Delta\epsilon_l
\end{equation}
\item
\noindent For $\hat{\epsilon}=\mathbf{0}$ find the first column of
the optimal matrix $A$:
\begin{equation}\label{eq:LT6}
    g(\epsilon) \cong g(\mathbf{0}) +
    \sum_{l=1}^{NM}\left(\sum_{i=1}^{NM}\exp\left(\mu_i\right)C^{\text{LT}}_{il}\right)\Delta\epsilon_l
\end{equation}
\noindent Set
$\alpha_l=\left(\sum_{i=1}^{NM}\exp\left(\mu_i\right)C^{\text{LT}}_{il}\right)=
\sum_{m=1}^{NM}\left(\sum_{i=1}^{NM}\exp\left(\mu_i\right)C_{im}^{\text{CH}}\right)A_{ml}$
and set $\mathbf{u^{(1)}} = (e^{\mu_1},\dots,e^{\mu_{MN}})^T$ and
$\mathbf{B^{(1)}}= (C^{\text{CH}})^T\mathbf{u^{(1)}}$ then the first
column is
\begin{equation}\label{eq:LT7}
\mathbf{A_{\cdot
1}^*}=\pm\frac{\mathbf{B^{(1)}}}{\|\mathbf{B^{(1)}}\|}.
\end{equation}
\item The $p$-th optimal column is found considering
the $p$-th expansion point of the strategy. This results in:
\begin{equation}
    g(\epsilon) \cong g(\hat{\epsilon}_p) +
    \sum_{l=1}^{NM}\left(\sum_{i=1}^{NM}\exp\left(\mu_i+\sum_{k=1}^{p-1}C_{ik}^*\right)C^{\text{LT}}_{il}\right)\Delta\epsilon_l
\end{equation}
\noindent where $C_{ik}^*=(C^{\text{CH}}A^*_k)_i$, $k<p$ have been
already found at the $p-1$ previous steps and $\mathbf{A_{\cdot
p}}^*$ must be orthogonal to all the other columns.

Also define $\mathbf{u^{(p)}} =
\left(\exp\left(\mu_1+\sum_{k=1}^{p-1}C_{1k}^*\right),\dots,\exp\left(\mu_{MN}+\sum_{k=1}^{p-1}C_{MNk}^*\right)\right)^T$
and $\mathbf{B^{(p)}}= (C^{\text{CH}})^T\mathbf{u^{(p)}}$, then the
solution is
\begin{equation}\label{eq:LT8}
\mathbf{A_{\cdot
p}^*}=\pm\frac{\mathbf{B^{(p)}}}{\|\mathbf{B^{(p)}}\|}.
\end{equation}
We remark that at each time step all the columns must be
orthogonalized (see Sabino \cite{Sabino08b,Sab08})
\end{enumerate}
\subsection{Linear Transformation in the CIR Market}
We extend the procedure described in the previous section with the
assumption of a CIR dynamics. Consider an equally spaced time-grid
whose time step is denoted by $\Delta t$, the Euler scheme of the
CIR dynamic is:
\begin{equation}\label{eq:EulCIR}
S_j = S_{j-1} + \alpha\left(\mu-S_{j-1}\right)\Delta t +
\sigma\sqrt{S_{j-1}\Delta t}\,Z_j,\quad j=1,\dots N,
\end{equation}
where $\mathbf{Z}$ is a Gaussian vector of $N$ independent standard
random variables. The Asian payoff is:
\begin{equation}
\psi = \left(h(\mathbf{Z})-K_S\right)^+\quad\text{with}\quad
h(\mathbf{Z}) =
\frac{1}{N}\sum_{j=1}^NS_j(\mathbf{Z}).\label{eq:CIRAsian}
\end{equation}
As done in the BS setting, we find the LT-based convenient
directions of stratification applying the LT technique to the
differentiable part of the payoff function of an Asian option (in
this dynamics we only consider options on a single asset). This is
done also for the other barrier-style Asian options, so that we have
the same directions of stratification  for the three types
derivative contracts. Applying the LT decomposition the Euler scheme
becomes
\begin{equation}\label{eq:EulCIR:LT}
S_j = S_{j-1} + \alpha\left(\mu-S_{j-1}\right)\Delta t +
\sigma\sqrt{S_{j-1}\Delta t}\sum_{m=1}^NA_{jm}\epsilon_m,\quad
j=1,\dots N,
\end{equation}
the computation of the first direction of LT decomposition consists
in the following steps:
\begin{enumerate}
\item Compute the partial derivatives $\frac{\partial
S_j}{\partial\epsilon_1}$, $j=1,\dots,N$:
\begin{equation}\label{eq:EulDer}
\frac{\partial S_j(\mathbf{0})}{\partial\epsilon_1}=
\left\{\left[1-\alpha\Delta t +\frac{\sigma}{2}\sqrt{\frac{\Delta
t}{S_{j-1}}}\sum_{m=1}^NA_{jm}\epsilon_m\right]\frac{\partial
S_{j-1}}{\partial\epsilon_1}+\sigma\sqrt{\Delta
tS_{j-1}}A_{j1}\right\}\Big|_{\epsilon=\mathbf{0}}.
\end{equation}
Now denote $p_j^{(1)} = \frac{\partial
S_j(\mathbf{0})}{\partial\epsilon_1}$,
$\alpha_{j-1}^{(1)}=\left(1-\alpha\Delta t
+\frac{\sigma}{2}\sqrt{\frac{\Delta
t}{S_{j-1}}}\sum_{m=1}^NA_{jm}\epsilon_m \right)\Big|_{\epsilon=
\mathbf{0}}$ and $\beta_{j-1}^{(1)}= \sigma\sqrt{\Delta
tS_{j-1}(\mathbf{0})}$, we have
\begin{equation}\label{eq:iterLT1}
p_j^{(1)} = p_{j-1}^{(1)} \alpha_{j-1}^{(1)} +
\beta_{j-1}^{(1)}A_{j1}.
\end{equation}
\begin{rem}
The third term in $\alpha^{(1)}$ is zero, nevertheless we show its
expression because the results below still hold when we compute the
vector $\alpha^{(l)}$ of parameters in the $l$-th step, where we
consider $\epsilon_l = (\underbrace{1,1,\dots,1}_{l-1 \text{
times}},0,\dots,0)$, $l=1,\dots,N$.
\end{rem}
\begin{prop}\label{prop:LT}
The solution of the recurrence equation (\ref{eq:iterLT1}) is a
linear combination of the rows of $A$:
\begin{equation}\label{eq.sol1}
p_j^{(1)} = \sum_{m=1}^jw_m^{(1)}(j)A_{m1}\quad, j=1,\dots,N,
\end{equation}
where the components of vector $\mathbf{w}^{(1)}(j)$, that depends
on $j$, are:
\begin{equation}
w_m^{(1)}(j) = \beta_{m-1}^{(1)}\prod_{i=m}^{j-1}\alpha_i^{(1)}.
\end{equation}
\end{prop}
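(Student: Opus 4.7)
The plan is to prove Proposition \ref{prop:LT} by a direct induction on $j$, unrolling the scalar linear recurrence in equation (\ref{eq:iterLT1}). The key observation is that the recurrence is affine in the unknown $p_j^{(1)}$ and the forcing term at step $j$ is linear in $A_{j1}$, so each $A_{m1}$ enters $p_j^{(1)}$ only once (at step $m$) and is then propagated forward by multiplication by the factors $\alpha_m^{(1)},\alpha_{m+1}^{(1)},\dots,\alpha_{j-1}^{(1)}$.

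For the base case I would take $j=1$. Since $S_0$ is the deterministic initial condition, $\partial S_0(\mathbf{0})/\partial\epsilon_1 = 0$, so $p_0^{(1)}=0$ and the recurrence gives $p_1^{(1)}=\beta_0^{(1)}A_{11}$. This matches the claimed formula because the product $\prod_{i=1}^{0}\alpha_i^{(1)}$ is empty and therefore equal to $1$, so $w_1^{(1)}(1)=\beta_0^{(1)}$.

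For the inductive step, assume $p_{j-1}^{(1)}=\sum_{m=1}^{j-1}w_m^{(1)}(j-1)A_{m1}$ with $w_m^{(1)}(j-1)=\beta_{m-1}^{(1)}\prod_{i=m}^{j-2}\alpha_i^{(1)}$. Substituting into (\ref{eq:iterLT1}) yields
\begin{equation*}
p_j^{(1)} = \alpha_{j-1}^{(1)}\sum_{m=1}^{j-1}w_m^{(1)}(j-1)A_{m1} + \beta_{j-1}^{(1)}A_{j1}.
\end{equation*}
For $m\le j-1$, multiplying $w_m^{(1)}(j-1)$ by $\alpha_{j-1}^{(1)}$ simply appends the factor $\alpha_{j-1}^{(1)}$ to the product, turning $\prod_{i=m}^{j-2}\alpha_i^{(1)}$ into $\prod_{i=m}^{j-1}\alpha_i^{(1)}$, which is exactly $w_m^{(1)}(j)$. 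For $m=j$, the coefficient of $A_{j1}$ is $\beta_{j-1}^{(1)}$, which agrees with $w_j^{(1)}(j)=\beta_{j-1}^{(1)}\prod_{i=j}^{j-1}\alpha_i^{(1)}=\beta_{j-1}^{(1)}$ by the empty-product convention. Combining the two contributions gives $p_j^{(1)}=\sum_{m=1}^{j}w_m^{(1)}(j)A_{m1}$, completing the induction.

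There is essentially no obstacle here: the result is a routine closed form for a first-order linear recurrence with variable coefficients, and the only subtlety is the bookkeeping for the empty product at $m=j$ and the initial condition $p_0^{(1)}=0$. The reason to record the statement explicitly is that the closed form shows $p_j^{(1)}$ is linear in the first column $A_{\cdot 1}$ of $A$, which is precisely what is needed to recast the subsequent LT optimization (\ref{4.3.13}) over $\mathbf{A}_{\cdot 1}$ as a quadratic form, and to generalize the argument to the $l$-th column via the remark preceding the proposition.
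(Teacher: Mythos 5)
Your induction is correct and is exactly the argument the paper has in mind: the paper simply states that "the proof can be obtained by iteration," i.e.\ by unrolling the first-order linear recurrence (\ref{eq:iterLT1}), and Remark \ref{rem:1} ($w_j^{(1)}(j)=\beta_{j-1}^{(1)}$ with the empty-product convention, $w_m^{(1)}(j+1)=\alpha_j^{(1)}w_m^{(1)}(j)$) is precisely the bookkeeping you carry out in your inductive step. Nothing is missing; you have merely written out in full what the authors left implicit.
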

The superscripts indicate the number of the direction under
consideration and the proof can be obtained by iteration.
\begin{rem}\label{rem:1}
Note that $w_j^{(1)}(j) = \beta_{j-1}^{(1)}$ with the assumption
that $\prod_{i\in\emptyset}\alpha_i^{(1)}=1$ and
$w_m^{(1)}(j+1)=\alpha_j^{(1)}w^{(1)}_m(j),\,\forall j,m$.
\end{rem}
\item Denote $\tilde{h}(\epsilon)=h(\mathbf{Z})= h(A\epsilon)$ then
\begin{equation}\label{eq:CIROpt} \frac{\partial
\tilde{h}(\mathbf{0})}{\partial\epsilon_1}=\frac{1}{N}\sum_{j=1}^Np_j^{(1)}.
\end{equation}
\begin{corr}\label{corr1}
$\frac{\partial\tilde{h}}{\partial\epsilon_l}\Big|_{\epsilon_1=\mathbf{0}}$
in equation (\ref{eq:CIROpt}) is a linear combination of the rows of
$A$:
\begin{equation}\label{eq:sol2}
\sum_{j=1}^Np_j^{(1)} = \sum_{j=1}^Nt_j^{(1)}A_{j1},\quad\forall
N\in\mathbb{N},
\end{equation}
where
\begin{equation}
t_j^{(1)} =
\beta_{j-1}^{(1)}\left(1+\sum_{l=j}^{N-1}\prod_{i=j}^l\alpha_i^{(1)}\right).
\end{equation}
\end{corr}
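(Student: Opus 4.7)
The plan is to obtain the formula directly from Proposition \ref{prop:LT} by a straightforward interchange of summations, with no further recurrence needed. Concretely, I would proceed as follows.

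First, I would substitute the closed form
\[
p_j^{(1)} = \sum_{m=1}^{j} w_m^{(1)}(j)\, A_{m1}
\]
given by Proposition \ref{prop:LT} into the left-hand side $\sum_{j=1}^{N} p_j^{(1)}$, and then swap the order of summation. The double sum $\sum_{j=1}^{N}\sum_{m=1}^{j}$ is the same as $\sum_{m=1}^{N}\sum_{j=m}^{N}$, so
\[
\sum_{j=1}^{N} p_j^{(1)} = \sum_{m=1}^{N} A_{m1} \sum_{j=m}^{N} w_m^{(1)}(j).
\]
This already shows that $\sum_{j=1}^N p_j^{(1)}$ is a linear combination of the entries $A_{m1}$ and identifies the coefficients as $t_m^{(1)} := \sum_{j=m}^{N} w_m^{(1)}(j)$.

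Second, I would plug in the explicit expression $w_m^{(1)}(j) = \beta_{m-1}^{(1)} \prod_{i=m}^{j-1} \alpha_i^{(1)}$ and factor out $\beta_{m-1}^{(1)}$ (which does not depend on $j$):
\[
t_m^{(1)} = \beta_{m-1}^{(1)} \sum_{j=m}^{N} \prod_{i=m}^{j-1} \alpha_i^{(1)}.
\]
Using the convention $\prod_{i\in\emptyset}\alpha_i^{(1)}=1$ noted in Remark \ref{rem:1}, the $j=m$ term of this sum contributes $1$, and the remaining terms (for $j=m+1,\dots,N$) can be reindexed by $l=j-1$ running from $m$ to $N-1$, giving
\[
\sum_{j=m}^{N} \prod_{i=m}^{j-1} \alpha_i^{(1)} = 1 + \sum_{l=m}^{N-1} \prod_{i=m}^{l} \alpha_i^{(1)}.
\]
Multiplying by $\beta_{m-1}^{(1)}$ recovers exactly the stated expression for $t_m^{(1)}$, and renaming the summation index $m$ as $j$ yields equation (\ref{eq:sol2}).

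There is no real obstacle here: the only subtle point is correctly handling the empty product for the initial term of the reindexed sum. Once the Fubini-type exchange of sums is done, the algebra is purely mechanical. The statement $\forall N\in\mathbb{N}$ requires no separate induction, since the identity holds termwise for each fixed $N$ through the interchange above.
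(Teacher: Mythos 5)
Your proof is correct. The paper itself gives no detailed argument for Corollary \ref{corr1}: it only remarks that, as for Proposition \ref{prop:LT}, ``the proof can be obtained by iteration,'' which suggests an induction on $N$ (establish the formula for $N-1$ terms, add $p_N^{(1)}$, and check that each coefficient $t_j^{(1)}$ picks up the extra summand $\prod_{i=j}^{N-1}\alpha_i^{(1)}$, consistently with Remark \ref{rem:1}). You instead take the closed form of Proposition \ref{prop:LT} as given and perform a direct Fubini-type interchange $\sum_{j=1}^{N}\sum_{m=1}^{j}=\sum_{m=1}^{N}\sum_{j=m}^{N}$, which identifies $t_m^{(1)}=\sum_{j=m}^{N}w_m^{(1)}(j)$ in one step; the reindexing $l=j-1$ and the empty-product convention then give exactly the stated expression. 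The two routes are logically equivalent and equally elementary, but yours is arguably preferable: it is non-inductive, it makes transparent that $t_m^{(1)}$ is simply the column sum of the coefficients $w_m^{(1)}(j)$ over $j\ge m$, and it handles the only delicate point (the $j=m$ term contributing $1$ via the empty product) explicitly rather than burying it in an induction base case.
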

As for Proposition \ref{prop:LT}, the proof can be obtained by
iteration.
\begin{rem}\label{rem:2}
$t_N^{(1)}=\beta^{(1)}_{N-1}=w_N^{(1)}(N)$.
\end{rem}
\item The first optimal direction is established by the following
theorem.
\begin{theo}\label{th:CIR}
The first column of the matrix $A$, solution of the LT optimization
problem, in the case of Asian options assuming the Euler
discretization of the CIR model is:
\begin{equation}\label{eq:CIRSol}
\mathbf{A_{\cdot l}}^* =
\frac{\mathbf{t}^{(1)}}{\|\mathbf{t}^{(1)}\|},
\end{equation}
with $t$ being the vector defined in Corollary \ref{corr1}.
\end{theo}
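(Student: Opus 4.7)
The plan is to reduce Theorem \ref{th:CIR} to a one-line application of Cauchy--Schwarz, since Corollary \ref{corr1} has already done the heavy lifting. By the chain rule and Corollary \ref{corr1},
\begin{equation*}
\frac{\partial \tilde{h}(\mathbf{0})}{\partial \epsilon_1} \;=\; \frac{1}{N}\sum_{j=1}^{N} p_j^{(1)} \;=\; \frac{1}{N}\sum_{j=1}^{N} t_j^{(1)} A_{j1} \;=\; \frac{1}{N}\,\mathbf{t}^{(1)} \cdot \mathbf{A}_{\cdot 1},
\end{equation*}
so the LT optimization problem (\ref{4.3.13}) for $k=1$ simply asks to maximize $(\mathbf{t}^{(1)} \cdot \mathbf{A}_{\cdot 1})^2$ over unit vectors $\mathbf{A}_{\cdot 1}$.

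Before invoking Cauchy--Schwarz, I would first verify that $\mathbf{t}^{(1)}$ is a genuinely deterministic vector that does not depend on the matrix $A$ itself. Inspecting the definitions: $\beta_{j-1}^{(1)} = \sigma \sqrt{\Delta t \, S_{j-1}(\mathbf{0})}$ depends only on the trajectory $S_0, S_1(\mathbf{0}), \dots, S_{N-1}(\mathbf{0})$ of the Euler scheme started from $\epsilon = \mathbf{0}$, which is a purely drift-driven recursion independent of $A$. Similarly, evaluating $\alpha_i^{(1)}$ at $\epsilon = \mathbf{0}$ kills the third term $\frac{\sigma}{2}\sqrt{\Delta t / S_{j-1}}\sum_m A_{jm}\epsilon_m$ (as noted in the remark following (\ref{eq:iterLT1})), leaving $\alpha_i^{(1)} = 1 - \alpha \Delta t$. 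Hence each $t_j^{(1)}$ is a fixed scalar, and the optimization is truly a constrained maximization of a linear functional on the unit sphere.

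Once this is established, the standard argument applies: by Cauchy--Schwarz,
\begin{equation*}
\bigl(\mathbf{t}^{(1)} \cdot \mathbf{A}_{\cdot 1}\bigr)^2 \;\le\; \|\mathbf{t}^{(1)}\|^2 \,\|\mathbf{A}_{\cdot 1}\|^2 \;=\; \|\mathbf{t}^{(1)}\|^2,
\end{equation*}
with equality if and only if $\mathbf{A}_{\cdot 1}$ is collinear with $\mathbf{t}^{(1)}$. Normalizing yields the claimed optimizer $\mathbf{A}_{\cdot 1}^* = \pm\, \mathbf{t}^{(1)} / \|\mathbf{t}^{(1)}\|$, which is (\ref{eq:CIRSol}) up to the overall sign (irrelevant since the objective is a square). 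The logic mirrors the closing paragraph of Proposition \ref{prop:LA} for the LA directions.

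There is essentially no obstacle in the final step; the only place where care is required is the bookkeeping described in the second paragraph, namely checking that Proposition \ref{prop:LT} extracts the dependence on $A_{m1}$ linearly with coefficients that are independent of the remaining columns of $A$. This is automatic once one evaluates at $\epsilon = \mathbf{0}$, so aside from citing Corollary \ref{corr1} and the vanishing remark, no further computation is needed.
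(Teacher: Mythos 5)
Your argument is correct and is essentially the paper's own proof: the authors likewise observe that the scalar product $\mathbf{t}^{(1)}\cdot\mathbf{A}_{\cdot 1}$ is maximized over unit vectors when the two vectors are parallel, and conclude by normalization. Your extra check that $\mathbf{t}^{(1)}$ does not depend on $A$ (via the vanishing of the third term in $\alpha_i^{(1)}$ at $\epsilon=\mathbf{0}$) is a sensible piece of bookkeeping the paper leaves implicit, but it does not change the route.
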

\begin{proof}
Knowing that the scalar product
$\mathbf{t}^{(1)}\cdot\mathbf{A_{\cdot 1}}$ attains the maximum when
the two vectors are parallel, we can conclude that the optimal
$\mathbf{A}_{\cdot 1}^*$ is proportional to $\mathbf{t}^{(1)}$.
After normalization the optimum solution is given by equation
(\ref{eq:CIRSol}).
\end{proof}
\begin{rem}
We observe that, if $\mathbf{Z}=\mathbf{0}$, after some algebra, the
Euler discretization is simply
\begin{equation}
S_j-\mu = \left(1-\alpha\Delta t\right)\left(S_{j-1} -\mu\right)
\end{equation}
then
\begin{equation}
S_j=\left(1-\alpha\Delta t\right)^j\left(S_{0} -\mu\right) + \mu
\end{equation}
We use the results of this remark to simplify the computational cost
to find the first direction of stratification.
\end{rem}
\item In order to compute the remaining optimal columns we need to
repeat the procedure illustrated in steps 1 to step 3. As far as the
calculation of the $l$-th column is concerned, one needs to evaluate
$\frac{\partial S_j(\hat{\epsilon}_l)}{\partial\epsilon_l}$ and
accordingly the quantities $p_j^{(l)}$, $\alpha_j^{(l)}$,
$\beta_j^{(l)},\, \forall j$, and the components of the vectors
$\mathbf{w}^{(l)}$ and $\mathbf{t}^{(l)}$. All the results in
Proposition \ref{prop:LT}, Corollary \ref{corr1} and Theorem
\ref{th:CIR} remain valid while now considering the quantities with
superscripts $l$. The orthogonal directions LT are then obtained by
orthogonalization.
\end{enumerate}
\subsection{Linear Approximation in the Black-Scholes Market}
Hereafter we describe how to find the directions of the LA technique
in the case of a BS dynamics. Since the payoff function is not
differentiable, as for the LT method we consider only the
differentiable part $g-K_S$. The gradient has components:
\begin{equation*}
 \frac{\partial g(\epsilon)}{\partial \epsilon_m}=\sum_{k=1}^{MN}C_{km} \exp\left\{\mu_k + \sum_{l=1}^{MN}C_{kl}\epsilon_l\right\},
\end{equation*}
then,
\begin{equation}
\nabla g(\mathbf{0})=\left[
\begin{array}{c}
 \sum_{k=1}^{MN}C_{k1}e^{\mu_1}\\
\vdots\\
 \sum_{k=1}^{MN}C_{kMN}e^{\mu_{MN}}\\
\end{array}
\right]\quad\text{and in general}\quad \nabla
g(\mathbf{\hat{\epsilon}})=\left[
\begin{array}{c}
 \sum_{k=1}^{MN}C_{k1}e^{\mu_1+\hat{\epsilon}_1}\\
\vdots\\
 \sum_{k=1}^{MN}C_{kMN}e^{\mu_{MN}+\hat{\epsilon}_{MN}}\\
\end{array}
\right].
\end{equation}
In the above derivation we assume that $C=C^{\text{CH}}$ since we do
not need to introduce any orthogonal matrix and the Cholesky
decomposition of the autocorrelation matrix of a Brownian motion is
 explicitly known. It turns out that the LT  and the LA methods
return the same first order  direction. Nevertheless, the latter
approach can produce different directions changing the value at
which the gradient is calculated. In contrast, the LT procedure
admits solution only assuming the starting points strategy described
above. Hence, the LA is more flexible and in particular the new
algorithm does not require an incremental QR decomposition to find
the new directions. Indeed, if we would look for orthogonal
directions  a unique orthogonalization would be required;
consequently, the LA computational cost is much lower. Moreover, the
mathematical derivation is simpler.
\subsection{Linear Approximation in the CIR Market}
We now illustrate how to apply the new LT approach for the
derivative contracts above in CIR dynamics.  Consider the Euler
discretization scheme in equation (\ref{eq:EulCIR}) and compute the
following partial derivatives for $j,l=1,\dots,N$:
\begin{equation*}
\frac{\partial S_j}{\partial Z_l}= \left[1-\alpha\Delta t
+\frac{\sigma}{2}\sqrt{\frac{\Delta
t}{S_{j-1}}}Z_j\right]\frac{\partial S_{j-1}}{\partial
Z_l}+\sigma\sqrt{\Delta tS_{j-1}}\delta_{jl},
\end{equation*}
then
\begin{equation}
\frac{\partial S_j(\mathbf{0})}{\partial Z_l}= \left(1-\alpha\Delta
t \right) \frac{\partial S_{j-1}(\mathbf{0})}{\partial
Z_l}+\sigma\sqrt{\Delta tS_{j-1}(\mathbf{0})}\delta_{jl},
\end{equation}
and the gradient is
\begin{equation}
\nabla S_j(\mathbf{0})=\left[
\begin{array}{c}
 \left(1-\alpha\Delta t \right)^{j-1}\sigma\sqrt{\Delta tS_0}\\
\left(1-\alpha\Delta t \right)^{j-2}\sigma\sqrt{\Delta tS_1(\mathbf{0})}\\
\vdots\\
\left(1-\alpha\Delta t \right)\sigma\sqrt{\Delta tS_{j-2}(\mathbf{0})}\\
\sigma\sqrt{\Delta tS_{j-1}(\mathbf{0})}\\
0\\
\vdots\\
0
\end{array}
\right].
\end{equation}
Due to Proposition \ref{prop:LA}, the LA first optimal direction  is
given by the normalized sum of $\nabla
S_j(\mathbf{0}),\,j=1,\dots,N$. Further directions are obtained by
iterating this procedure with a starting points rule. Alternatively,
we can choose the evaluation points  as in the LT strategy or the
components of the $l$-th direction for the starting point of the
gradient for the $l+1$-direction.
\section{Numerical Illustrations}\label{sect:NI}
We now illustrate the results developed in the previous sections
through examples and numerical experiments. As mentioned before, we
consider the BS and the CIR dynamics and different exotic
path-dependent options. All the numerical procedures have been
implemented in MATLAB on a computer with Intel Pentium M, 1.60 GHz,
1 GB RAM. In the numerical illustrations we consider $K=1000$ strata
and $N_S=2\times10^6$ total number of scenarios so that for
orthogonal directions we have a constant allocation rule (which, in
this case, coincides the proportional rule as the strata are
equiprobable) with $2000$ random draws in each stratum
(\textit{const} in the tables). When we consider non-orthogonal
directions the constant allocation rule is not proportional anymore
since the strata are not equiprobable. For the optimal allocation
rule (\textit{opt}), the standard deviations have been computed by a
first pilot run and then they have been used in a second stage to
determine the stratified estimator.

We report the estimated variances  and the total computational times
with constant and optimal allocation. We compare the variances
employing the directions of stratification returned by GHS (see
Glasserman et al. \cite{GHS99}), LT, LA, the PCA and their
combination. Note that the GHS procedure requires the calculation of
an importance sampling direction that is a computationally demanding
task. In our experiments we report the variances due to the
stratification only in order to compare the relative efficiency of
the pure stratification methods. As far as the PCA directions are
concerned, they consist of the eigenvectors associated to the
highest eigenvalues of the autocorrelation matrix of the
multi-dimensional Brownian motion that drives the BS dynamics. In
contrast, since the CIR dynamics is not Gaussian, in a first pilot
run with a $2000$-sample  we compute the MC estimation of the
autocorrelation matrix of the price dynamics and then calculate its
eigenvectors and values. We employ a Euler scheme that always takes
the positive value of the square-root term because it was shown that
this exhibits the smallest discretization bias among Euler
CIR-discretizations (see Andersen \cite{And2007}). Even if this
dynamics is not normal, the $i$-th step price, given the $i-1$-th
one, is normal and this can justify the use of the PCA in the CIR
dynamics. We consider the multiple combination of two directions of
stratification. Our algorithm and considerations are also applicable
to additional directions but, due to the so called \textit{curse of
dimensionality}, this would require a higher number of strata and
hence a higher number of total samples that would considerably
increase the computational burden. Finally, we compare these
stratified estimators to LHS-based estimators (see Owen
\cite{ow1992a} or Stein \cite{Stein87} for more on this topic).
Stein \cite{Stein87} proved that LHS eliminates the variance of the
additive part of the integrand (payoff) function and hence produces
an important variance reduction when coupled with LA or LT.
Unfortunately, it is difficult to numerically compute the asymptotic
variance in the central limit theorem for the LHS estimator. LHS is
characterized by a fixed multiple allocation rule that has a high
computational cost. Our purpose is to compare this very
high-dimensional allocation rule to one with a lower dimension where
we can adopt optimal allocation. In addition, the expectation of
interest $\mathbb{E}[\psi(\mathbf{Z})]$ is equal to
$\mathbb{E}[\psi(O\mathbf{Z})]$ where $O$ is a general orthogonal
matrix. In a standard MC simulation the variance of the two
estimators does not depend on $O$ but in contrast, the accuracy of
LHS-based estimators critically depend on the choice of $O$. Our
simulations adopt the orthogonal matrix produced by the LT
decomposition that has been shown to be an efficient choice (see
Sabino \cite{Sab08}).
\subsection{Asian Options in the Black-Scholes Market}
Our first example is the pricing of arithmetic Asian options on a
single risky security defined by equation (\ref{eq:AsianPayoff})
with $M=1$. For simplicity we assume that the time grid is regular
with time steps $t_i=i\Delta t, i=1,\dots,N$. This permits a simple
derivation of the PCA and the Cholesky decomposition of the
autocorrelation matrix of the Brownian motion (see {\AA}kesson and
Lehoczky \cite{AL1998}). Table \ref{tab:Asian} reports the input
parameters used in the simulation with  different moneyness of the
options. We remind that in this setting LT and LA provide the same
first order direction.
\begin{table}
\centering \caption{Input Parameters in the BS dynamics}
    \subtable[Arithmetic Asian Options]{\label{tab:Asian}
        \begin{tabular}{|c|c|c|c|c|c|c|c|}
        \hline
        $S_0$ & $K_S$ &  $N$ & $r$ & $\sigma$ & $T$ \\
        \hline
        $50$ & $45, 50, 55$ & $64$&$0.05$& $0.3$ & $1$ \\
        \hline
        \end{tabular}
    }
    \quad
    \subtable[Arithmetic Asian Barrier Options]{\label{tab:AsianBarrier}
            \begin{tabular}{|c|c|c|c|c|c|c|c|c|}
            \hline
            $S_0$ & $K_S$ &  $B$& $N$ & $r$ & $\sigma$ & $T$  \\
            \hline
            $50$ & $50, 55$ & $60, 70,80$ & $16$ & $0.05$ & $0.1$ & $1$ \\
            \hline
            \end{tabular}
    }
\end{table}
Tables ~\ref{tab:AsianResultsBS}-\ref{tab:BarrierCompleteResultsBS}
report the numerical results obtained and the total computational
times: all the procedures return unbiased estimates of the option
prices while giving remarkably different variances. All the
stratified techniques give a variance reduction that is particularly
remarkable with the GHS and the LA (LT)  methods. The PCA orthogonal
directions (one dimensional and two dimensional)  give a modest
effect also taking into account the computational times. The main
observation is that GHS and LA (LT) show the same computational cost
and the same variance reduction. Both LA and GHS give a remarkable
variance reduction, of a factor of more than $100$ in the case of
constant allocation and of several hundreds in the case of optimal
allocation. However, given the parameters in Table \ref{tab:Asian},
we stress the fact that the computational time required for the
calculation of the direction is really a small part of the total
time requested for all the proposed procedures. In contrast, with a
really high problem dimension (i.e. a dimension $1000$ typical in
financial applications), the solution of the GHS optimization
problem becomes a hard task depending on the starting guess and its
computational burden has a relevant influence. In contrast, the LA
(LT) algorithm consists in a simple vector $O(N)$ calculation that
is feasible even in high-dimensional problems. Table
\ref{tab:AnglesAsian} reports the angles (in degrees) between the
discussed directions. The GHS and LA directions are almost parallel
meaning that the GHS algorithm is not so sensitive to the moneyness
and this justifies the equal performance in terms of variance
reduction of the LA method. As mentioned before, the PCA direction
does not furnish a relevant variance direction and hence the
non-orthogonal $2$-dimensional  stratification that employs such a
direction always returns a lower accuracy than the GHS or LA
methods. Moreover, the orthogonal GHS or LA bi-dimensional
stratifications give variance reductions that are about $4$ times
lower than the corresponding one-dimensional ones. We remind that
the two settings have the same number of strata so that we can
conclude that the second order direction has a lower impact on the
variance reduction and, with these directions of stratification, it
is more efficient to employ a stratified MC estimator with a single
direction. We conclude the study for the simple Asian options with
the comparison between the accuracies of the LHS and the stratified
sampling with a single direction with optimal allocation. The
results shown in Table \ref{tab:AsianResultsBS} illustrate that  the
LHS never outperforms the optimal allocation. Indeed, the LHS-based
variance is at least two times the variance obtained with the
stratified estimator with optimal allocation. Moreover, the
computational cost is a lot higher, almost twice as high as the
times needed for the optimal allocation. All these arguments
strongly favor the use of convenient directions with optimal
allocation.

We modify the Asian option example by adding a knock-out barrier at
expiration $T$ or at each sampling date so that the option pays
nothing if the asset price is above the barrier. Due to the
discontinuous payoff of barrier options, the GHS optimization
problem is a demanding task especially when the barrier is at each
time step (indeed Glasserman et al. \cite{GHS99} did not elaborate
this possibility). In contrast, the LA (LT) focuses only  the
continuous part of the payoff function. Table \ref{tab:AsianBarrier}
reports the input parameters used in the simulation with different
moneyness and barriers. The values of the barriers should be larger
than the strike prices but not too high otherwise the pricing
problem would almost boil down into the case without barrier.
\begin{table}
    \centering
    \caption{Angles between the Stratifying Directions in
    degrees}
    \subtable[Arithmetic Asian Options]{\label{tab:AnglesAsian}
    \begin{tabular}{|c|c|c|c|}
            \hline
            &$K_S=45$ & $K_S=50$ & $K_S=55$\\
            \hline
            LA-GHS & $1.35$ &$1.04$& $1.74$ \\
            LA-PCA & $54.62$& $52.73$& $51.60$\\
            GHS-PCA & $56.60$&$53.83$ & $53.30$\\
            \hline
            \end{tabular}
    }
    \subtable[Arithmetic Asian Barrier Options]{\label{tab:AnglesBarrier}
        \begin{tabular}{|c|c|c|c|c|}
            \hline
            & \multicolumn{2}{|c|}{$K_S=50$} &
            \multicolumn{2}{|c|}{$K_S=55$} \\
            \hline
            &  $B=60$ & $B=70$ & $B=70$ & $B=80$\\
            \hline
                LA-GHS & $0.37$ & $0.37$ & $0.75$ & $0.75$\\
            LA-PCA & $51.95$ & $51.95$ & $51.89$ & $51.89$ \\
            GHS-PCA & $51.67$ & $51.67$ & $51.10$ & $51.10$ \\
            \hline
        \end{tabular}
        }
\end{table}
Also for barrier options (barrier at expiry), we notice that GHS and
LA give directions of stratification  that are almost parallel as
illustrated in Table \ref{tab:AnglesBarrier}. This justifies the
approximation of the LA method and its use for stratified MC to
price the two types of barrier options. In addition, the GHS
algorithm is not applicable to Asian options with a complete
barrier. Different approaches should be employed in order to improve
the stratification efficiency for barrier-style options, as
suggested in Etoré et al. \cite{EFJM09}, but these are nevertheless
computationally expensive and use orthogonal directions. The
stratified MC does not return variances as low as for plain Asian
options, especially when the barrier is close to the strike price.
For example, the case of Asian options with barrier $B=80$ (both at
expiry and at each sampling date) and with strike $K_S=55$ displays
a variance reduction of several hundreds with a computational time
that ranges between $22\%$ and $55\%$ higher than the standard MC.
However, when the barrier and the strike price are $K_S=50$ and
$B=60$, respectively, the variance reduction is lower with an extra
effort ranging from $22\%$ and $50\%$  with respect to the standard
MC.

The numerical simulation of the prices of Asian basket options with
a barrier close to the strike price, both at expiry and at all the
monitoring times, shows that stratifying along multiple directions
can be worthwhile. Indeed, if $K_S=50$ and $B=60$, the multiple
stratification enhances the accuracy of the estimation compared to
the use of a single direction. In particular, the highest variance
reduction is achieved with the choice of non-orthogonal directions
(LA-PCA) with optimal allocation. In this setting the variance
reduction is of an order $100$, with barrier at expiry, or $40$,
with barrier at each monitoring time, and is several times higher
compared to the other setting of stratification.

Finally, even for Asian barrier options the LHS never outperforms
the technique that displays the smallest variance with optimal
allocation. These considerations suggest that the use of multiple
non-orthogonal directions can be worthwhile. However, finding many
different multiple directions is not a simple task.
\subsection{Basket Options in the Black-Scholes Market}
 \begin{table}
 \caption{Input Parameters and Angles between Directions of Stratification  for Basket Options.}
    \subtable[Input Parameters.]{\label{tab:BasketCIRPar}
     \begin{tabular}{|c|c|c|c|c|c|}
         \hline
         $M$ & $S_0$ & $\rho$ & $r$ & $\sigma$ & $T$ \\
         \hline
         40 &Linear 20-60 &0.5&0.05& Linear $0.1-0.4$ & $1$ \\
         \hline
     \end{tabular}
     }
\subtable[Angles in degrees]{\label{tab:BasketBSAngles}
            \begin{tabular}{|c|c|c|c|}
            \hline
            &$K_S=30$ & $K_S=40$ & $K_S=50$\\
            \hline
            LA-GHS & $2.76$ &$3.11$& $2.52$ \\
            LA-PCA & $64.74$& $65.04$& $65.19$\\
            GHS-PCA & $62.29$&$62.02$ & $62.47$\\
            \hline
            \end{tabular}
            }
 \end{table}
In this example the stratification estimator once more improves the
accuracy of the standard MC method. Indeed, in the BS market, the
financial features of basket options are almost the same as those of
arithmetic Asian options. The main difference between the two is
that for Asian options the Gaussian variables are correlated by the
autocovariance matrix of a single Brownian motion while for basket
options the dependence is measured by the covariance matrix among
the asset returns. In addition, both payoffs contain a (weighted)
average of the exponential of a Gaussian random vector. Table
\ref{tab:Basket} shows that for all the considered exercise prices,
the stratification using the LA (LT) with and without optimal
allocation has a remarkable variance reduction comparable to the one
given by the GHS algorithm with the same computational
considerations as in the Asian option example. Indeed, these two
directions are almost parallel (see Table \ref{tab:BasketBSAngles}).
The PCA-based direction has again a modest effect in terms of
variance reduction and the stratification over a single linear
projection produces a better accuracy than the one that exploits two
directions. Finally, the LHS estimator neither achieves a higher
variance reduction than the stratified estimator with a single LA
direction with optimal allocation  nor does it require a lower
computational effort.
\subsection{Asian Options in the CIR Market}
 \begin{table}
 \centering \caption{Input Parameters and Angles between Directions of Stratification in the CIR dynamics.}
 \subtable[Input Parameters]{
 \begin{tabular}{|c|c|c|c|c|c|c|}
         \hline
         $S_0$ & $N$ & $r$ & $\alpha$ & $\mu$ & $\sigma$ & $T$ \\
         \hline
         $100$ &$64$&$0.05$& $1.5$ & $1$ & $0.8$ & $1$ \\
         \hline
 \end{tabular}\label{tab:CIRparameters}
 }
 \subtable[Angles in degrees for Asian Options]{
            \begin{tabular}{|c|c|c|c|}
            \hline
            &$K_S=90$ & $K_S=100$ & $K_S=110$\\
            \hline
            LA-LT & $1.00$ &$1.00$& $1.00$ \\
            LA-PCA & $43.72$& $43.72$& $43.72$\\
            LT-PCA & $44.24$&$41.52$ & $41.53$\\
            \hline
            \end{tabular}
            }\label{tab:AsianCIRAngles}
\end{table}
As a last example we consider arithmetic Asian options on a single
asset in a CIR dynamics whose depicted parameters (in Table
\ref{tab:CIRparameters}) are chosen in order to ensure positive
prices ($2\alpha\mu>\sigma^2$). In this setting the LA method and
the LT decomposition do not provide the same stratification
direction and the GHS algorithm is really difficult to apply.
However, as illustrated in Table \ref{tab:AsianCIRAngles} the
directions returned by the LT and LA are almost parallel. In any
case the derivation of the LA solution and its implementation are
much easier. Since the CIR model is neither a Gaussian nor a
lognormal process,  the PCA decomposition is not applicable.
However, in order to obtain a further direction we estimate a
PCA-like direction as explained at the beginning of this section.
Tables \ref{tab:AsianResultsCIR}-\ref{tab:BarrierCompleteResultsCIR}
show that both the LA and LT algorithms give remarkable variance
reductions. The best accuracies are obtained with the stratification
along a single direction which attains a reduction of an order of
several hundreds, both with a constant and optimal allocation rule.
The extra cost for the computational time is only $20\%$. As in the
BS setting, the PCA approach is less efficient and requires a higher
computational cost due to calculation of the sampled autocovariance
matrix of the price process. Also in this situation the solution
employing two orthogonal or  non-orthogonal directions provides a
variance reduction. Unfortunately, this choice never provides an
accuracy as precise as the one obtained by a single direction.
Moreover, the use of the fixed LHS-allocation rule never enhances
the accuracy of the simulation more than the best low-dimensional
stratification method with optimal allocation.

As in the BS example, we add a knock-out barrier at expiry or at
each monitoring time. For this latter option we must  chose a
barrier level that is much higher than the strike price. Indeed, due
to the high variability of the CIR dynamics, with a low barrier
value the option would easily knock-out producing a zero-valued
price.

As already mentioned, in the example of barrier options we adopt the
same convenient directions of stratification that we would consider
without the barrier since the LA and LT approaches do not take into
account the non-differentiable part of the payoff. Tables
\ref{tab:ExpiryResultsBS} and \ref{tab:BarrierCompleteResultsBS}
illustrate the results of this numerical investigation. The variance
reduction is not as efficient as the case without barrier but in
contrast, the use of multiple directions improves the efficiency of
the simulation without highly influencing the computational cost. In
addition, the combination of non-orthogonal directions can achieve a
better variance reduction. Indeed, the combination of LA-PCA
directions (LT and LA are almost parallel) returns a variance that
ranges from $10$ to $30$ times lower than that with standard Monte
Carlo. Moreover, this estimated variance is always at least equal,
for $K_S=100,\,B=170$ with barrier at each monitoring time, or lower
than the variance obtained with different combinations of
stratifying directions and barrier levels.

Finally, as in all examples, the LHS sampling coupled with LT does
not provide a convenient alternative to stratification over few
directions with optimal allocation.
\section{Concluding Remarks and Future Perspectives}\label{sect:concl}
In this paper we have investigated the use of convenient
multidimensional directions of stratification in order to enhance
the accuracy of Monte Carlo methods. We have discussed directions of
stratification  that are easy to derive and   display variance
reductions that are comparable to those introduced by Glasserman et
al. \cite{GHS99}. These solutions do not require a complex
calculation and can be applied in really high-dimensional problems
without an extra cost. In contrast, the use of the Glasserman et al.
\cite{GHS99} or Etoré et al. \cite{EFJM09} methods risk to be
computationally unfeasible and are based only on orthogonal
directions. Indeed, the LT and the LA directions are computed under
convenient approximations that lead to simple matrix operations and
vector norms. Moreover, we have proved an algorithm that allows to
correctly generate Gaussian vectors stratified along non-orthogonal
directions. Our numerical experiments demonstrate that the proposed
convenient directions return remarkable variance reductions both in
BS, where the proposed techniques display the same variance
reduction as those given by GHS, and in the CIR dynamics. In
particular, the use of multiple non-orthogonal directions can be
worthwhile for barrier style options. Moreover, in this work we show
that the use of a few convenient directions of stratification with
optimal allocation always outperform LHS (even in its LT-enhanced
form) especially in terms of computational burden. A natural
extension would be the combination with importance sampling
procedures like the Robust Adaptive Technique recently proposed by
Jourdain and Lelong \cite{JL09} for Gaussian random vectors. In
addition, due to its simple derivation and its affinity with the
Fox's  greedy rule (see Fox \cite{Fox99}), it would be interesting
to investigate how to apply the LA procedure to derive a Quasi-Monte
Carlo version of discretization schemes for stochastic volatility
models like those proposed by Andersen \cite{And2007} and Jourdain
and Sbai \cite{JS09}. \clearpage

\begin{sidewaystable}
\centering
\caption{Results for Arithmetic Asian Options in the BS
dynamics.} \label{tab:AsianResultsBS}
    \footnotesize
\begin{tabular}{|c|c|c|c|c|c|c|c|c|c|c|c|c|c|c|c|c|c|c|c|c|}
    \hline
     & Price &  & & \multicolumn{6}{|c|}{1 Dir }& \multicolumn{10}{|c|}{2 dirs} & \\
     \hline
    & & & MC & \multicolumn{2}{|c|}{GHS} & \multicolumn{2}{|c|}{LA} & \multicolumn{2}{|c|}{PCA}  &  \multicolumn{2}{|c|}{GHS} & \multicolumn{2}{|c|}{LA} & \multicolumn{2}{|c|}{PCA}  & \multicolumn{2}{|c|}{GHS-PCA} & \multicolumn{2}{|c|}{LA-PCA}  & LHS\\
    \hline
    & &  & & const & opt & const & opt & const & opt & const & opt & const & opt & const & opt & const & opt & const & opt & \\
    \hline
    \multirow{2}{*}{$K_S=45$} &\multirow{2}{*}{$7.02$} & var &$55.89$&$0.32$&$0.06$&$0.31$&$0.06$&$15.46$&$11.4$&$1.74$&$0.61$&$0.94$&$0.16$&$10.08$&$8.66$&$8.12$&$0.21$&$8.32$&$0.19$&$0.06$\\
    & & time
    &$1$&$\times1.41$&$\times1.51$&$\times1.41$&$\times1.51$&$\times1.41$&$\times1.51$&$\times1.41$&$\times1.51$&$\times1.41$&$\times1.51$&$\times1.41$&$\times1.51$&$\times1.58$&$\times1.68$&$\times1.58$&$\times1.68$&$\times3.6$\\
    \hline
    \multirow{2}{*}{$K_S=50$} &\multirow{2}{*}{$4.02$} & var &$36.966$&$0.28$&$0.04$&$0.31$&$0.05$&$20.94$&$16.18$&$0.95$&$0.2$&$0.94$&$0.12$&$7.77$&$6.18$&$9.47$&$0.21$&$9.21$&$0.2$&$0.06$\\
    & & time
    &$1$&$\times 1.41$&$\times 1.51$&$\times 1.41$&$\times 1.51$&$\times 1.41$&$\times 1.51$&$\times 1.41$&$\times 1.51$&$\times 1.41$&$\times 1.51$&$\times 1.41$&$\times 1.51$&$\times 1.58$&$\times 1.68$&$\times 1.58$&$\times 1.68$&$\times 3.24$\\
    \hline
    \multirow{2}{*}{$K_S=55$} &\multirow{2}{*}{$2.06$} & var &$20.357$&$0.3$&$0.02$&$0.31$&$0.03$&$10.52$&$7.75$&$1.06$&$0.28$&$0.93$&$0.09$&$7.54$&$3.8641$&$7.4$&$0.13$&$7.49$&$0.13$&$0.06$\\
    & & time
    &1&$\times 1.41$&$\times 1.51$&$\times 1.41$&$\times 1.51$&$\times 1.41$&$\times 1.51$&$\times 1.41$&$\times 1.51$&$\times 1.41$&$\times 1.51$&$\times 1.41$&$\times 1.51$&$\times 1.58$&$\times 1.68$&$\times 1.58$&$\times 1.68$&$\times 3.77$\\
    \hline
\end{tabular}

\caption{Results for Arithmetic Asian Options with a Barrier at
Expiry in the BS dynamics.}\label{tab:ExpiryResultsBS}
    \footnotesize
\begin{tabular}{|c|c|c|c|c|c|c|c|c|c|c|c|c|c|c|c|c|c|c|c|c|}
    \hline
     & Price &  & & \multicolumn{6}{|c|}{1 Dir }& \multicolumn{10}{|c|}{2 dirs} & \\
     \hline
    & & & MC & \multicolumn{2}{|c|}{GHS} & \multicolumn{2}{|c|}{LA} & \multicolumn{2}{|c|}{PCA}  &  \multicolumn{2}{|c|}{GHS} & \multicolumn{2}{|c|}{LA} & \multicolumn{2}{|c|}{PCA}  & \multicolumn{2}{|c|}{GHS-PCA} & \multicolumn{2}{|c|}{LA-PCA}  & LHS\\
    \hline
    & &  & & const & opt & const & opt & const & opt & const & opt & const & opt & const & opt & const & opt & const & opt & \\
    \hline
    \multirow{2}{*}{$\begin{array}{c}
                      K_S=50 \\B=60
                    \end{array}$
    } &\multirow{2}{*}{$1.38$} & var &$2.99$&$1.13$&$0.3$&$1.13$&$0.31$&$2.99$&$2.99$&$0.54$&$0.23$&$0.83$&$0.19$&$1.24$&$0.93$&$0.33$&$0.02$&$0.32$&$0.02$&$1.02$\\
    & & time
    &$1$&$\times 1.41$&$\times 1.41$&$\times 1.41$&$\times 1.35$&$\times 1.41$&$\times 1.35$&$\times 1.47$&$\times 1.47$&$\times 1.47$&$\times 1.40$&$\times 1.47$&$\times 1.40$&$\times 1.50$&$\times 1.50$&$\times 1.55$&$\times 1.50$&$\times 3.91$\\
    \hline
    \multirow{2}{*}{$\begin{array}{c}
                      K_S=50 \\B=70
                    \end{array}$
    } &\multirow{2}{*}{$1.9$} & var &$4.8$&$0.13$&$0.01$&$0.13$&$0.01$&$4.77$&$4.77$&$0.3$&$0.16$&$0.15$&$0.02$&$1.28$&$0.99$&$0.41$&$0.02$&$0.68$&$0.02$&$0.13$\\
    & & time
    &$1$&$\times 1.41$&$\times 1.41$&$\times 1.41$&$\times 1.35$&$\times 1.41$&$\times 1.35$&$\times 1.47$&$\times 1.47$&$\times 1.47$&$\times 1.40$&$\times 1.47$&$\times 1.40$&$\times 1.55$&$\times 1.50$&$\times 1.55$&$\times 1.50$&$\times 3.90$\\
    \hline
    \multirow{2}{*}{$\begin{array}{c}
                      K_S=55 \\B=70
                    \end{array}$} &\multirow{2}{*}{$0.19$} & var &$0.49$&$0.04$&$0.00074$&$0.04$&$0.00082$&$0.48$&$0.48$&$0.04$&$0.0035$&$0.06$&$0.0039$&$0.22$&$0.06$&$0.17$&$0.0038$&$0.16$&$0.0037$&$0.04$\\
    & & time
    &$1$&$\times 1.41$&$\times 1.41$&$\times 1.41$&$\times 1.35$&$\times 1.41$&$\times 1.35$&$\times 1.47$&$\times 1.47$&$\times 1.47$&$\times 1.40$&$\times 1.47$&$\times 1.40$&$\times 1.55$&$\times 1.50$&$\times 1.55$&$\times 1.50$&$\times 3.89$\\
    \hline
        \multirow{2}{*}{$\begin{array}{c}
                      K_S=55 \\B=80
                    \end{array}$
    } &\multirow{2}{*}{$0.2$} & var &$0.55$&$0.0016$&0.00026&$0.0018$&$0.00058$&$0.55$&$0.54$&$0.05$&$0.0037$&$0.06$&$0.0038$&$0.22$&$0.06$&$0.18$&$0.0048$&$017$&$0.0048$&$0.0018$\\
    & & time
    &$1$&$\times 1.41$&$\times 1.41$&$\times 1.41$&$\times 1.35$&$\times 1.41$&$\times 1.35$&$\times 1.47$&$\times 1.47$&$\times 1.47$&$\times 1.40$&$\times 1.47$&$\times 1.40$&$\times 1.50$&$\times 1.55$&$\times 1.50$&$\times 1.55$&$\times 3.91$\\
    \hline
\end{tabular}

\caption{Results for Arithmetic Asian Options with a Complete
Barrier in the BS dynamics.}\label{tab:BarrierCompleteResultsBS}
    \footnotesize
\begin{tabular}{|c|c|c|c|c|c|c|c|c|c|c|c|c|c|c|}
    \hline
     & Price &  & & \multicolumn{4}{|c|}{1 Dir }& \multicolumn{6}{|c|}{2 dirs} & \\
     \hline
    & & & MC &  \multicolumn{2}{|c|}{LA} & \multicolumn{2}{|c|}{PCA}  &   \multicolumn{2}{|c|}{LA} & \multicolumn{2}{|c|}{PCA} & \multicolumn{2}{|c|}{LA-PCA}  & LHS\\
    \hline
    & &  & & const & opt & const & opt & const & opt & const & opt & const & opt &  \\
    \hline
    \multirow{2}{*}{$\begin{array}{c}
                      K_S=50 \\B=60
                    \end{array}$
    } &\multirow{2}{*}{$1.22$} & var &$2.42$&$0.85$&$0.23$&$2.42$&$2.39$&$0.54$&$0.12$&$1.23$&$0.92$&$0.53$&$0.07$&$0.77$\\
    & & time
    &$1$&$\times 1.54$&$\times 1.14$&$\times 1.54$&$\times 1.14$&$\times 1.54$&$\times 1.14$&$\times 1.54$&$\times 1.14$&$\times 1.56$&$\times 1.22$&$\times 3.80$\\
    \hline
    \multirow{2}{*}{$\begin{array}{c}
                      K_S=50 \\B=70
                    \end{array}$
    } &\multirow{2}{*}{$1.89$} & var &$4.76$&$0.14$&$0.0047$&$4.75$&$4.75$&$0.16$&$0.02$&$1.29$&$1$&$1.52$&$0.02$&$0.15$\\
    & & time
    &$11.17$&$\times 1.54$&$\times 1.14$&$\times 1.54$&$\times 1.14$&$\times 1.54$&$\times 1.14$&$\times 1.54$&$\times 1.14$&$\times 1.56$&$\times 1.22$&$\times 3.81$\\
    \hline
    \multirow{2}{*}{$\begin{array}{c}
                      K_S=55 \\B=70
                    \end{array}$} &\multirow{2}{*}{$0.19$} & var &$0.47$&$0.041$&$0.00087$&$0.47$&$0.46$&$0.06$&$0.0038$&$0.22$&$0.06$&$0.14$&$0.0036$&$0.04$\\
    & & time
    &$1$&$\times 1.54$&$\times 1.14$&$\times 1.54$&$\times 1.14$&$\times 1.54$&$\times 1.14$&$\times 1.54$&$\times 1.14$&$\times 1.56$&$\times 1.22$&$\times 3.85$\\
    \hline
        \multirow{2}{*}{$\begin{array}{c}
                      K_S=55 \\B=80
                    \end{array}$
    } &\multirow{2}{*}{$0.2$} & var &$0.55$&$0.0015$&$0.000059$&$0.55$&$0.53$&$0.05$&$0.0038$&$0.22$&$0.06$&$0.056$&$0.0048$&$0.002$\\
    & & time
    &$1$&$\times 1.54$&$\times 1.14$&$\times 1.54$&$\times 1.14$&$\times 1.54$&$\times 1.14$&$\times 1.54$&$\times 1.14$&$\times 1.56$&$\times 1.22$&$\times 3.83$\\
    \hline
\end{tabular}

 \centering
\caption{Results for Basket Options in the BS
dynamics.}\label{tab:Basket}
    \footnotesize
\begin{tabular}{|c|c|c|c|c|c|c|c|c|c|c|c|c|c|c|c|c|c|c|c|c|}
    \hline
     & Price &  & & \multicolumn{6}{|c|}{1 Dir }& \multicolumn{10}{|c|}{2 dirs} & \\
     \hline
    & & & MC & \multicolumn{2}{|c|}{GHS} & \multicolumn{2}{|c|}{LA} & \multicolumn{2}{|c|}{PCA}  &  \multicolumn{2}{|c|}{GHS} & \multicolumn{2}{|c|}{LA} & \multicolumn{2}{|c|}{PCA}  & \multicolumn{2}{|c|}{GHS-PCA} & \multicolumn{2}{|c|}{LA-PCA}  & LHS\\
    \hline
    & &  & & const & opt & const & opt & const & opt & const & opt & const & opt & const & opt & const & opt & const & opt & \\
    \hline
    \multirow{2}{*}{$K_S=30$} &\multirow{2}{*}{$11.58$} & var &$61.77$&$0.09$&$0.06$&$0.1$&$0.06$&$31.54$&$21.63$&$0.93$&$0.29$&$0.91$&$0.25427$&$21.17$&$18.34$&$6.33$&$0.24$&$5.27$&$0.25406$&$0.06$\\
    & & time
    &$1$&$\times 1.48$&$\times 1.60$&$\times 1.48$&$\times 1.60$&$\times 1.48$&$\times 1.60$&$\times 1.48$&$\times 1.60$&$\times 1.48$&$\times 1.60$&$\times 1.48$&$\times 1.60$&$\times 1.75$&$\times 1.79$&$\times 1.75$&$\times 1.79$&$2.87$\\
    \hline
    \multirow{2}{*}{$K_S=40$} &\multirow{2}{*}{$4.15$} & var &$34.88$&$0.07$&$0.03$&$0.08$&$0.04$&$24.91$&$17.74$&$0.84$&$0.15$&$0.86$&$0.15$&$19.1$&$16.71$&$3.9$&$0.12$&$3.69$&$0.13214$&$0.1$\\
    & & time
    &$1$&$\times 1.48$&$\times 1.60$&$\times 1.48$&$\times 1.60$&$\times 1.48$&$\times 1.60$&$\times 1.48$&$\times 1.60$&$\times 1.48$&$\times 1.60$&$\times 1.48$&$\times 1.60$&$\times 1.75$&$\times 1.79$&$\times 1.75$&$\times 1.79$&$2.81$\\
    \hline
    \multirow{2}{*}{$K_S=50$} &\multirow{2}{*}{$0.93$} & var &$8.92$&$0.04$&$0.004$&$0.05$&$0.005$&$3.92$&$3.88$&$0.8$&$0.06$&$0.81287$&$0.06$&$3.05$&$2.18$&$2.87$&$0.04$&$2.55$&$0.05$&$0.08$\\
    & & time
    &$1$&$\times 1.48$&$\times 1.60$&$\times 1.48$&$\times 1.60$&$\times 1.48$&$\times 1.60$&$\times 1.48$&$\times 1.60$&$\times 1.48$&$\times 1.60$&$\times 1.48$&$\times 1.60$&$\times 1.75$&$\times 1.79$&$\times 1.75$&$\times 1.79$&$2.89$\\
    \hline
\end{tabular}

\end{sidewaystable}
\clearpage
\newpage
\begin{sidewaystable}
\centering \caption{Results for Asian Options in the CIR
dynamics.}\label{tab:AsianResultsCIR}
    \footnotesize
\begin{tabular}{|c|c|c|c|c|c|c|c|c|c|c|c|c|c|c|c|c|}
    \hline
     & Price &  & & \multicolumn{6}{|c|}{1 Dir }& \multicolumn{6}{|c|}{2 dirs} & \\
     \hline
    & & & MC &  \multicolumn{2}{|c|}{LT} & \multicolumn{2}{|c|}{LA}  &   \multicolumn{2}{|c|}{PCA} & \multicolumn{2}{|c|}{LT} & \multicolumn{2}{|c|}{PCA}  & \multicolumn{2}{|c|}{LA-PCA}  & LHS\\
    \hline
    & &  & & const & opt & const & opt & const & opt & const & opt & const & opt & const & opt &  \\
    \hline
    \multirow{2}{*}{$\begin{array}{c}
                      K_S=90
                    \end{array}$
    } &\multirow{2}{*}{$15.63$} & var &$427.73$&$1.85$&$1.09$&$1.54$&$0.9$&$115.73$&$106.85$&$9.3$&$2.28$&$51.21$&$40.61$&$9.13$&$4.62$&$1.08$\\
    & & time
    &$1$&$\times 1.2$&$\times 1.22$&$\times 1.2$&$\times 1.22$&$\times 1.5$&$\times 1.6$&$\times 1.2$&$\times 1.22$&$\times 1.5$&$\times 1.6$&$\times 1.55$&$\times 1.55$&$\times 2.76$\\
    \hline
    \multirow{2}{*}{$\begin{array}{c}
                      K_S=100
                    \end{array}$
    } &\multirow{2}{*}{$10.6$} & var &$310.11$&$1.49$&$0.67$&1.22&$0.54$&$97.22$&$69.7$&$8.75$&$1.73$&$53.03$&$25.73$&$8.92$&$1.66$&$1.02$\\
    & & time
    &$1$&$\times 1.2$&$\times 1.22$&$\times 1.2$&$\times 1.22$&$\times 1.5$&$\times 1.6$&$\times 1.2$&$\times 1.22$&$\times 1.5$&$\times 1.6$&$\times 1.55$&$\times 1.554$&$\times 2.75$\\
    \hline
    \multirow{2}{*}{$\begin{array}{c}
                      K_S=110
                    \end{array}$} &\multirow{2}{*}{$6.95$} & var &$212.19$&$1.18$&$0.37$&&$0.29$&$82.25$&$54.28$&$8.72$&$1.26$&$40.34$&$20.69$&$8.29$&$2.22$&$0.9$\\
    & & time
    &$1$&$\times 1.2$&$\times 1.22$&$\times 1.2$&$\times 1.22$&$\times 1.5$&$\times 1.6$&$\times 1.2$&$\times 1.22$&$\times 1.5$&$\times 1.6$&$\times 1.55$&$\times 1.55$&$\times 2.76$\\
    \hline
\end{tabular}

\caption{Results for Arithmetic Asian Options with a Barrier at
Expiry in the CIR dynamics.}\label{tab:ExpiryResultsCIR}
    \footnotesize
\begin{tabular}{|c|c|c|c|c|c|c|c|c|c|c|c|c|c|c|c|c|}
    \hline
     & Price &  & & \multicolumn{6}{|c|}{1 Dir }& \multicolumn{6}{|c|}{2 dirs} & \\
     \hline
    & & & MC &  \multicolumn{2}{|c|}{LT} & \multicolumn{2}{|c|}{LA}  &   \multicolumn{2}{|c|}{PCA} & \multicolumn{2}{|c|}{LT} & \multicolumn{2}{|c|}{PCA}  & \multicolumn{2}{|c|}{LA-PCA}  & LHS\\
    \hline
    & &  & & const & opt & const & opt & const & opt & const & opt & const & opt & const & opt &  \\
    \hline
    \multirow{2}{*}{$\begin{array}{c}
                      K_S=100\\B=110
                    \end{array}$
    } &\multirow{2}{*}{$2.63$} & var &$60.43$&$45.76$&$17.77$&45.78&17.22&$55.81$&$38.69$&$26.19$&$9.17$&$40.61$&$12.49$&$20.23$&$3.08$&$39.46$\\
    & & time
    &$1$&$\times 1.2$&$\times 1.22$&$\times 1.2$&$\times 1.22$&$\times 1.5$&$\times 1.6$&$\times 1.2$&$\times 1.22$&$\times 1.5$&$\times 1.6$&$\times 1.55$&$\times 1.55$&$\times 2.91$\\
    \hline
    \multirow{2}{*}{$\begin{array}{c}
                      K_S=110\\B=120
                    \end{array}$
    } &\multirow{2}{*}{$1.82$} & var &$41.64$&$32.64$&$8.1$&$32.55$&$7.85$&$38.69$&$26.43$&$20.76$&$5.77$&$26.4$&$6.27$&$11.95$&$1.26$&$28.52$\\
    & & time
    &$1$&$\times 1.2$&$\times 1.22$&$\times 1.2$&$\times 1.22$&$\times 1.5$&$\times 1.6$&$\times 1.2$&$\times 1.22$&$\times 1.5$&$\times 1.6$&$\times 1.55$&$\times 1.55$&$\times 2.87$\\
    \hline
    \multirow{2}{*}{$\begin{array}{c}
                      K_S=100\\B=120
                    \end{array}$} &\multirow{2}{*}{$3.46$} & var &$81.21$&$34.54$&$20.77$&$31.61$&$20.3$&$53.62$&$33.82$&$37.05$&$15.01$&$50.05$&$15.57$&$21.19$&$4.5$&$48.97$\\
    & & time
    &$1$&$\times 1.2$&$\times 1.22$&$\times 1.2$&$\times 1.22$&$\times 1.5$&$\times 1.6$&$\times 1.2$&$\times 1.22$&$\times 1.5$&$\times 1.6$&$\times 1.55$&$\times 1.55$&$\times 2.87$\\
    \hline
\end{tabular}

\caption{Results for Arithmetic Asian Options with a Complete
Barrier in the CIR dynamics.}\label{tab:BarrierCompleteResultsCIR}
    \footnotesize
\begin{tabular}{|c|c|c|c|c|c|c|c|c|c|c|c|c|c|c|c|c|}
    \hline
     & Price &  & & \multicolumn{6}{|c|}{1 Dir }& \multicolumn{6}{|c|}{2 dirs} & \\
     \hline
    & & & MC &  \multicolumn{2}{|c|}{LT} & \multicolumn{2}{|c|}{LA}  &   \multicolumn{2}{|c|}{PCA} & \multicolumn{2}{|c|}{LT} & \multicolumn{2}{|c|}{PCA}  & \multicolumn{2}{|c|}{LA-PCA}  & LHS\\
    \hline
    & &  & & const & opt & const & opt & const & opt & const & opt & const & opt & const & opt &  \\
    \hline
    \multirow{2}{*}{$\begin{array}{c}
                      K_S=100\\B=180
                    \end{array}$
    } &\multirow{2}{*}{$2.84$} & var &$42.98$&$25.39$&$7.44$&$25.38$&$7.32$&$37.52$&$27.98$&$22.4$&$6.25$&$30.14$&$16.5$&$21.37$&$5.57$&$22.58$\\
    & & time
    &$1$&$\times 1.21$&$\times 1.1$&$\times 1.21$&$\times 1.1$&$\times 1.33$&$\times 1.23$&$\times 1.21$&$\times 1.1$&$\times 1.33$&$\times 1.23$&$\times 1.33$&$\times 1.23$&$\times 2.82$\\
    \hline
    \multirow{2}{*}{$\begin{array}{c}
                      K_S=110\\B=180
                    \end{array}$
    } &\multirow{2}{*}{$1.1$} & var &$14.03$&$9.51$&$2.05$&$9.59$&$2.05$&$12.68$&$8.37$&$8.63$&$1.73$&$11.33$&$4.76$&$8.35$&$1.58$&$8.79$\\
    & & time
    &$1$&$\times 1.21$&$\times 1.1$&$\times 1.21$&$\times 1.1$&$\times 1.33$&$\times 1.23$&$\times 1.21$&$\times 1.1$&$\times 1.33$&$\times 1.23$&$\times 1.33$&$\times 1.23$&$\times 2.85$ \\
    \hline
    \multirow{2}{*}{$\begin{array}{c}
                      K_S=100\\B=170
                    \end{array}$} &\multirow{2}{*}{$1.79$} & var &$23.7$&$15.86$&$4.65$&$15.96$&$4.58$&$21.59$&$15.21$&$10.75$&$3.44$&$15.97$&$8.75$&$13.73$&$3.47$&$15.08$\\
    & & time
    &$1$&$\times 1.21$&$\times 1.1$&$\times 1.21$&$\times 1.1$&$\times 1.33$&$\times 1.23$&$\times 1.21$&$\times 1.1$&$\times 1.33$&$\times 1.23$&$\times 1.33$&$\times 1.23$&$\times 2.79$\\
    \hline
\end{tabular}

\end{sidewaystable}
\clearpage
\newpage
\bibliographystyle{plain}
\bibliography{tesi}
\end{document}